\documentclass[orivec]{llncs}

\sloppy

\usepackage{etex}
\usepackage{marginnote}
\usepackage{amsmath, amssymb, amsfonts}
\usepackage{paralist}
\usepackage{alltt}
\usepackage{wrapfig}
\usepackage{stmaryrd}
\usepackage{MnSymbol}
\usepackage{paralist}
\usepackage{booktabs}

\usepackage{thmtools,thm-restate}
\usepackage[usenames,dvipsnames]{xcolor}
\usepackage{caption}
\usepackage{subcaption}
\captionsetup{compatibility=false}
\usepackage{floatrow}
\usepackage{cite}
\usepackage{calc}

\usepackage{algorithm}
\usepackage[noend]{algpseudocode}

\usepackage{tikz}
\usetikzlibrary{matrix}
\usetikzlibrary{fit}
\usetikzlibrary{automata}
\usetikzlibrary{shapes}

\newcommand\comment[1]{}

\usepackage{xspace}
\usepackage{array}
\usepackage{proof}

\newcommand{\TO}{$\infty$} 

\setlength{\intextsep}{1pt}
\setlength{\textfloatsep}{1pt}
\setlength{\floatsep}{1pt}
\abovedisplayskip=6pt plus 3pt minus 9pt
\abovedisplayshortskip=0pt plus 3pt
\belowdisplayskip=6pt plus 3pt minus 9pt
\belowdisplayshortskip=7pt plus 3pt minus 4pt
\setlength\abovecaptionskip{1pt}
\setlength\belowcaptionskip{1pt}

\newcolumntype{t}{>{\tt}l}

\newif\ifappendix
\appendixtrue


\newcommand{\hbra}{
  \hbox to \columnwidth{\vrule width0.3mm height 1.8mm depth-0.3mm
    \leaders\hrule height1.8mm depth-1.5mm\hfill
    \vrule width0.3mm height 1.8mm depth-0.3mm}}
\newcommand{\hket}{
  \hbox to \columnwidth{\vrule width0.3mm height1.5mm
    \leaders\hrule height0.3mm\hfill
    \vrule width0.3mm height1.5mm}}


\makeatletter
  \newcommand{\addToLabel}[1]{%
    \protected@edef\@currentlabel{\@currentlabel#1}%
  }
\makeatother
\newcommand{\ratio}{.35}

\newcounter{excounter}

\newenvironment{display}[2][\ratio]{\vspace{-0.5ex}
\begin{tabbing}
     \=  \=  \= \kill
    \textbf{#2}\\[-.8ex]
    \hbra\\[-.8ex]
  }{\\[-.8ex]\hket
  \end{tabbing}\vspace{-1ex}}


\newcounter{rule}

\newcommand{\staterule}[4][]{%
  \refstepcounter{rule}%
  \addToLabel{(#2)}
  $\begin{array}[b]{@{}l@{}}%
    \begin{array}{@{}c@{}}
      #3\\
      \hline
      \raisebox{0ex}[2.5ex]{\strut}#4%
    \end{array}
    \mbox{{\sc #2} #1}\\%
  \end{array}$}



\newcommand{\GAP}{1ex}

\newcommand{\smallstepx}[3]{\langle #1 \rangle \xrightarrow{#3} \langle #2 \rangle}
\newcommand{\smallstepxs}[3]{\langle #1 \rangle \xrightarrow{#3}\negthickspace^* \langle #2 \rangle}
\newcommand{\smallstep}[2]{\smallstepx{#1}{#2}{\epsilon}}

\newcommand{\smallstepo}[3]{\smallstepx{\VariableValuation,{#1}}{\VariableValuation,{#2}}{#3}}
\newcommand{\smallstepi}[4]{\smallstepx{\VariableValuation,{#1}}{\VariableValuation[#4],
\Skip}{#3}}

\newcommand{\kw}[1]{\ensuremath{\mathsf{#1}}}

\newcommand{\e}{e}

\newcommand{\Skip}{\kw{skip}}
\newcommand{\Seq}[2]{#1;#2}
\newcommand{\Asgn}[2]{#1:=#2}
\newcommand{\eIf}[3]{\kw{if}\ #1\ \kw{then}\ #2\ \kw{else}\ #3}
\newcommand{\eWhile}[2]{\kw{while}(#1)\ #2}

\newcommand{\Lock}[1]{\kw{lock}(#1)}

\newcommand{\Unlock}[1]{\kw{unlock}(#1)}
\newcommand{\Input}[1]{\kw{input}(#1)}
\newcommand{\Output}[2]{\kw{output}(#1,#2)}
\newcommand{\Yield}{\kw{yield}}
\newcommand{\Havoc}{\kw{havoc}}
\newcommand{\Await}[1]{\kw{await}(#1)}
\newcommand{\Signal}[1]{\kw{signal}(#1)}
\newcommand{\Reseta}[1]{\kw{reset}(#1)}

\newcommand{\VariableValuation}{\calV}
\newcommand{\evalExpression}[1]{#1[v / \VariableValuation[v]]}
\newcommand{\reads}[1]{ \mathit{Reads}(#1) }

\newcommand{\defref}[1]{Definition~\ref{def:#1}}
\newcommand{\secref}[1]{Sec.~\ref{sec:#1}}
\newcommand{\subsecref}[1]{Sec.~\ref{subsec:#1}}
\newcommand{\figref}[1]{Fig.~\ref{fig:#1}}

\newcommand{\algoref}[1]{Algo.~\ref{algo:#1}}
\newcommand{\thmref}[1]{Thm.~\ref{thm:#1}}
\newcommand{\propref}[1]{Prop.~\ref{prop:#1}}

\newcommand{\lineref}[1]{Line~\ref{line:#1}}

\newcommand\calA{\mathcal{A}}
\newcommand\calC{\mathcal{C}}

\newcommand\calV{\mathcal{V}}

\newcommand\tid{\mathit{tid}}

\newcommand\techReport{false} 

\newcommand\ifTechReport[1]{\ifthenelse{\equal{\techReport}{true}}{#1}{}}
\newcommand\ifPaper[1]{\ifthenelse{\equal{\techReport}{true}}{}{#1}}

\newcommand{\lang}[1]{\ensuremath{\mathcal L(\mbox{#1})}}
\newcommand{\Lang}[1]{\ensuremath{\mathcal L(#1)}}

\newcommand{\true}{\mathit{true}}
\newcommand{\false}{\mathit{false}}

\newcommand{\trues}{\mathsf{true}}
\newcommand{\falses}{\mathsf{false}}

\newcommand\Prog{\ensuremath{\mathit{P}}}
\newcommand\ProgState{\ensuremath{S}}

\newcommand\natset{\ensuremath{\mathbb{N}}}

\newcommand\antichain{\mathit{antichain}}
\newcommand\frontier{\mathit{frontier}}
\newcommand\overflow{\mathit{overflow}}

\newcommand\dirty{{\em dirty}}

\newcommand{\loc}{l}
\newcommand{\wlang}{\mathcal W}

\newcommand{\nfa}{NFA}
\newcommand{\nfas}{NFAs}
\newcommand{\A}{A}
\newcommand{\B}{B}

\newcommand{\early}{\eta_1}
\newcommand{\late}{\eta_2}

\newcommand{\upto}{modulo}
\newcommand{\suc}{succ}
\newcommand{\CI}{\mathrm{Clo}_I}
\newcommand{\CIk}{\mathrm{Clo}_{k,I}}
\newcommand{\assign}{\mathtt{:=}}

\newcommand{\ourtool}{{\sc Liss}}

\newcommand{\autC}{S_{\autCB}}
\newcommand{\autCB}{\B_{k,I}}

\newcommand{\thread}{\mathtt{T}}
\newcommand{\cProg}{\calC}

\newcommand{\sem}[1]{[\![ #1 ]\!]}
\newcommand{\Aut}{\calA}

\newcommand{\cex}{cex}

\newcommand{\mtt}[1]{\mathtt{#1}}

\newcommand{\mixcons}{\psi}

\newcommand\nhood{nhood}

\usepackage[plainpages = false, pdfpagelabels, pdfborder={0 0 0}, pdfdisplaydoctitle=true,pdflang=en-gb,pdfencoding=auto,bookmarksnumbered=true,bookmarks=true,bookmarksopen=true,pdfstartview={XYZ null null 1}]{hyperref}

\begin{document}
\title{From Non-preemptive to Preemptive Scheduling using
  Synchronization Synthesis
\thanks{This research was supported in part by the European Research
  Council (ERC) under grant 267989 (QUAREM), by the Austrian Science
  Fund (FWF) under grants S11402-N23 (RiSE) and Z211-N23 (Wittgenstein
  Award), by NSF under award CCF 1421752 and the Expeditions award CCF
  1138996, by the Simons Foundation, and by a gift from the Intel Corporation.} 
} 
\author{Pavol {\v C}ern{\'y}\inst{1} \and Edmund M. Clarke\inst{2} \and Thomas A. Henzinger\inst{3}
\and Arjun Radhakrishna\inst{4} \and Leonid
Ryzhyk\inst{2}\and Roopsha Samanta\inst{3}\and Thorsten Tarrach\inst{3}}
\institute{University of Colorado Boulder \and Carnegie Mellon University \and IST Austria 
\and University of Pennsylvania}
\maketitle

\begin{abstract}
We present a computer-aided programming approach to concurrency. 
The approach allows programmers to program assuming a friendly,
non-preemptive scheduler, and our synthesis procedure inserts
synchronization to ensure that the final program works even with a 
preemptive scheduler.
The correctness specification is implicit, inferred from
the non-preemptive behavior. 
Let us consider sequences of calls that the program makes to an
external interface.  
The specification requires that any such sequence produced under a preemptive 
scheduler should be included in the set of such sequences produced
under a non-preemptive scheduler. 
The solution is based on a finitary abstraction, 
an algorithm for bounded language inclusion \upto~an
independence relation, and rules for inserting
synchronization.  
We apply the approach to device-driver programming,
where the driver threads call the software interface of the device 
and the API provided by the operating system.   
Our experiments demonstrate that our synthesis method is precise 
and efficient, and, since it does not require explicit 
specifications, is more practical than the conventional approach 
based on user-provided assertions.

\end{abstract}

\section{Introduction}
\label{sec:intro}

Concurrent shared-memory programming is notoriously difficult and error-prone.  
Program synthesis for concurrency aims to mitigate this complexity
by synthesizing synchronization code automatically~\cite{clarke1982design,CAV13,CAV14,POPL15}.  
However, 
specifying the programmer's intent may be a challenge in itself.
Declarative mechanisms, such as assertions, suffer from
the drawback that it is difficult to ensure that the
specification is complete 
and fully captures the programmer's intent. 

We propose a solution where the specification is {\em implicit}. 
We observe that a core difficulty in concurrent programming originates 
from the fact that the scheduler can \emph{preempt} the execution 
of a thread at any time.  We therefore give the developer the option to program 
assuming a friendly, \emph{non-preemptive}, scheduler.  Our tool 
automatically synthesizes synchronization code to ensure that every
behavior of the program under preemptive scheduling is included in the
set of behaviors produced under non-preemptive scheduling. 
Thus, we use the non-preemptive semantics as an implicit 
correctness specification.

The non-preemptive scheduling model 
dramatically simplifies the development of concurrent software, including operating system (OS) kernels, 
network servers, database systems, etc.~\cite{Sadowski10,Ryzhyk_CKH_09}.  In this model, 
a thread can only be descheduled by voluntarily yielding control, e.g., by 
invoking a blocking operation.  Synchronization primitives may be used 
for communication between threads, e.g., a producer thread may use a semaphore 
to notify the consumer about availability of data. However, one does not need to worry 
about protecting accesses to shared state: a series of memory accesses 
executes atomically as long as the scheduled thread does not yield.

In defining behavioral equivalence between preemptive and non-preemptive 
executions, we focus on externally observable program behaviors: 
two program executions are \emph{observationally equivalent} if they generate 
the same sequences of calls to interfaces of interest.
This approach facilitates modular synthesis where a module's behavior 
is characterized in terms of its interaction with other modules. 
Given a multi-threaded program $\cProg$ and a synthesized program
$\cProg'$ obtained by adding synchronization  
to $\cProg$, $\cProg'$ is {\em preemption-safe} w.r.t. $\cProg$ if for each 
execution of $\cProg'$ under a preemptive scheduler, there is 
an observationally equivalent non-preemptive execution of $\cProg$. 
Our synthesis goal is to automatically generate a 
preemption-safe version of the input program.

We rely on abstraction to achieve efficient synthesis 
of multi-threaded programs.  We propose a simple, {\em data-oblivious}
abstraction inspired by an analysis of synchronization 
patterns in OS code, which tend to be independent of data values.  
The abstraction tracks types of accesses (read or write) to each
memory location while ignoring their values.
In addition, the abstraction tracks branching choices.
Calls to an external interface are modeled as writes to a special 
memory location, with independent interfaces modeled as separate 
locations. To the best of our knowledge, our proposed abstraction 
is yet to be explored in the verification and synthesis literature. 


Two abstract program executions are observationally 
equivalent if they are equal \upto~the classical independence relation 
$I$ on memory accesses: accesses to different
locations are independent, and accesses to the same location are independent 
iff they are both read accesses. Using this notion of equivalence, the
notion of preemption-safety is extended to abstract programs.

Under abstraction, we model each thread as a nondeterministic finite
automaton (\nfa) 
over a finite alphabet, with each symbol corresponding to a 
read or a write to a particular variable. This enables us to construct 
\nfas\ $N$, representing the abstraction of the original program $\cProg$ 
under non-premptive scheduling, and $P$,
representing the abstraction of the synthesized program $\cProg'$ under preemptive scheduling.
We show that preemption-safety of $\cProg'$ w.r.t. $\cProg$ is implied by 
preemption-safety of the abstract synthesized program w.r.t. the abstract
original program, which, in turn,  
is implied by language inclusion \upto~$I$ of \nfas~$P$ and $N$. 
While the problem of language inclusion \upto~an independence relation  
is undecidable~\cite{bertoni1982equivalence}, we show that the
antichain-based algorithm for standard language  
inclusion~\cite{de2006antichains} can be adapted to decide a bounded version of language
inclusion \upto~an independence relation. 

Our overall synthesis procedure works as follows: we run the algorithm for
bounded language inclusion \upto~$I$, iteratively increasing the bound, until
it reports that the inclusion holds, or finds a counterexample, or reaches a
timeout.  In the first case, the synthesis procedure terminates successfully.
In the second case, the counterexample is generalized 
to a set of counterexamples represented as a Boolean combination of ordering constraints
over control-flow locations (as in \cite{POPL15}). These constraints are analyzed for
patterns indicating the type of concurrency bug (atomicity,
ordering violation) and the type of applicable fix (lock insertion, statement reordering). 
After applying the fix(es), the procedure is restarted from scratch;
the process continues until we find a preemption-safe program, or reach a timeout.

We implemented our synthesis procedure in a new prototype tool called 
\ourtool{} (Language Inclusion-based Synchronization Synthesis) and
evaluated it on a series of device driver  
benchmarks, including an Ethernet driver for Linux and the 
synchronization skeleton of a USB-to-serial controller driver.
First, \ourtool\ was able to detect and eliminate 
all but two known race conditions in our examples;
these included one race condition that we 
previously missed when synthesizing from explicit 
specifications \cite{CAV14}, due to a missing assertion.  Second, our 
abstraction proved highly efficient: \ourtool{} runs an order of 
magnitude faster on the more complicated examples than our 
previous synthesis tool based on the CBMC  
model checker.  Third, our coarse abstraction proved 
surprisingly precise in practice: across all our benchmarks, 
we only encountered three program locations  
where manual abstraction refinement was needed to avoid 
the generation of unnecessary synchronization.
Overall, our evaluation strongly supports the use of
the implicit specification approach based on non-preemptive 
scheduling semantics as well as the use of the data-oblivious
abstraction to achieve practical synthesis for real-world systems
code.  

\noindent {\bf Contributions}. 
First, we propose a new specification-free approach to synchronization
synthesis.  
Given a program written assuming a friendly, non-preemptive scheduler, we 
automatically generate a preemption-safe version of the program.
Second, we introduce a novel abstraction scheme and use it to reduce
preemption-safety to language inclusion \upto~an independence relation.
Third, we present the first language inclusion-based 
synchronization synthesis procedure and tool for concurrent 
programs. 
Our synthesis procedure includes a 
new algorithm for a bounded version of our inherently undecidable
language inclusion problem.
Finally, we evaluate our 
synthesis procedure on several examples. To the best of our 
knowledge, \ourtool{} is the first synthesis tool capable of 
handling realistic (albeit simplified) device driver code, while 
previous tools were evaluated on small fragments of driver code or 
on manually extracted synchronization skeletons.
%

\noindent{\bf Related work}.
Synthesis of synchronization is an active research
area~\cite{bloem,Vechev:2010:ASS:1706299.1706338,Cherem:2008:ILA:1375581.1375619,ramalingam,SLJB08,shanlu,CAV13,CAV14,POPL15}.    
Closest to our work is a recent paper by Bloem et al.~\cite{bloem},
which uses implicit specifications for
synchronization synthesis. 
While their specification is given
by sequential behaviors, ours is given by non-preemptive
behaviors. This makes our approach applicable to scenarios where threads
need to communicate explicitly.
Further, correctness in \cite{bloem} is determined by comparing values at the end
of the execution. In contrast, we compare sequences of events, which
serves as a more 
suitable specification for infinitely-looping reactive systems. 

Many efforts in synthesis of synchronization focus on user-provided
specifications, such as assertions (our previous work~\cite{CAV13,CAV14,POPL15}).
However, it is hard to determine if a given set of assertions
represents a complete specification.
In this paper, we are solving language inclusion, a 
computationally harder problem than
reachability. However, due to our abstraction, our tool performs
significantly better than tools from~\cite{CAV13,CAV14}, which are based
on a mature model checker (CBMC~\cite{cbmc}). 
Our abstraction is reminiscent of previously used abstractions
that track reads and writes to individual locations (e.g.,~\cite{VYRS10,AKNP14}).
However, our abstraction is novel as it additionally tracks some
control-flow information (specifically, the branches taken) 
giving us higher precision with almost negligible computational cost.
The synthesis part of our approach is based on~\cite{POPL15}.

In~\cite{Vechev:2010:ASS:1706299.1706338} the authors rely on
assertions for synchronization synthesis and
include iterative abstraction refinement in their framework. This is
an interesting extension to pursue for our abstraction.
In other related work, CFix \cite{shanlu} can detect and fix
concurrency bugs by identifying simple bug patterns in the code.

\vspace{-1ex}
\section{Illustrative Example}
\label{sec:runningexample}

\begin{figure}[tb]
\begin{minipage}{0.22\textwidth}%
\scriptsize{
\begin{alltt}
void open\_dev() \{
1: while (*) \{  
2:   if (open==0) \{ 
3:     power\_up();   
4:   \}   
5:   open=open+1;
6:  yield; \} \}          
\end{alltt}
}          
\end{minipage}
~
\begin{minipage}{0.22\textwidth}%
\scriptsize{
\begin{alltt}
void close\_dev() \{  
7:  while (*) \{   
8:  if (open>0) \{            
9:    open=open-1;            
10:   if (open==0) \{         
11:    power\_down();        
12:  \} \}                        
13: yield; \} \}                            
\end{alltt}
}                  
\end{minipage}
\vrule
~
\begin{minipage}{0.22\textwidth}%
\scriptsize{
\begin{alltt}
void open\_dev\_abs() \{ 
1: while (*) \{ 
2: (A) r open; 
   if (*) \{ 
3:   (B) w dev; 
4:  \} 
5: (C) r open; 
   (D) w open; 
6: yield; \} \} 
\end{alltt}
}            
\end{minipage}
~
\begin{minipage}{0.22\textwidth}%
\scriptsize{
\begin{alltt}
void close\_dev\_abs() \{ 
7: while (*) \{ 
8:  (E) r open; 
    if (*) \{ 
9:    (F) r open; 
      (G) w open; 
10:   (H) r open; 
      if (*) \{ 
11:     (I) w dev; 
12: \} \} 
13: yield; \} \}
\end{alltt}
}            
\end{minipage}\\
(a)~~~~~~~~~~~~~~~~~~~~~~~~~~~~~~~~~~~~~~~~~~~(b)\\
\vspace{-2ex}
\caption{Running example and its abstraction}
\label{fig:example1}
\end{figure}

\autoref{fig:example1}a contains our running example. 
Consider the case where the procedures {\tt open\_dev()} and
{\tt close\_dev()} are invoked in parallel, possibly multiple times 
(modeled as a non-deterministic while loop).
The functions {\tt power\_up()} and {\tt power\_down()} represent calls to
a device.
For the non-preemptive scheduler, the sequence of calls to
the device will always be a repeating sequence of one call to {\tt power\_up()},
followed by one call to {\tt power\_down()}.
Without additional synchronization, however, 
there could be two calls to {\tt power\_up()} in a row when executing it with
a preemptive scheduler. Such a sequence is not  
observationally equivalent to any sequence that can be produced when
executing with a non-preemptive scheduler.  

\autoref{fig:example1}b contains the abstracted 
versions (we omit tracking of branching choices in the example)
of the two
procedures, {\tt open\_dev\_abs()} and
{\tt close\_dev\_abs()}. For instance, the instruction {\tt open =
  open + 1} is abstracted to the two instructions labeled (C) and (D).
The abstraction is coarse, but still captures the problem. 
Consider two threads {\tt T1} and {\tt T2} running the {\tt
  open\_dev\_abs()} procedure. 
The following trace is possible under a preemptive scheduler, 
but not under a non-preemptive scheduler:
{\tt T1.A; T2.A; T1.B; T1.C; T1.D; T2.B; T2.C; T2.D}.
Moreover, the trace cannot be transformed by swapping independent events 
into any trace possible under a non-preemptive scheduler. This is because 
instructions {\tt A} and {\tt D} are not independent.
Hence, the abstract trace exhibits the problem of two successive
calls to {\tt power\_up()} when executing with a preemptive
scheduler.  
Our synthesis procedure finds this problem, and 
fixes it by introducing a lock in {\tt open\_dev()} (see \secref{algo}).

\section{Preliminaries and Problem Statement}
\label{sec:defs}


\noindent{\bf Syntax.}
We assume that programs are written in a concurrent while language
$\wlang$.
A concurrent program $\cProg$ in $\wlang$ is a finite collection of
threads $\langle \thread_1, \ldots, \thread_n \rangle$ where each thread
is a statement written in the syntax from \autoref{fig:wlang_syntax}.
All $\wlang$ variables (program variables \texttt{std\_var}, lock variables
\texttt{lock\_var}, and condition variable \texttt{cond\_var}) range over integers and each statement
is labeled with a unique location identifier $\loc$.
The only non-standard syntactic constructs in $\wlang$ relate to the
{\em tags}.
Intuitively, each tag is a communication channel between the
program and an interface to an external system, and the $\Input{\sf tag}$ and
$\Output{\sf tag}{expr}$ statements read from and write to the channel.
We assume that the program and the external system interface can only communicate 
through the channel. 
In practice, we use the tags to model device registers.
In our presentation, we consider only a single external interface. Our
implementation can handle communication with several interfaces.

\begin{figure}

  \vspace{1mm}
  \hbra
  \vspace{-2mm}
  \begin{alltt}
expr ::= std\_var | constant | \(operator\)(expr, expr, \(\ldots\), expr) 
lstmt ::= loc: stmt | lstmt; lstmt
stmt ::= skip | std\_var := expr | std\_var := havoc() 
 | if (expr) lstmt else lstmt | while (expr) lstmt | std\_var := input(tag)
 | output(tag, expr) | lock(lock\_var)  | unlock(lock\_var) 
 | signal(cond\_var)  | await(cond\_var) | reset(cond\_var) | yield
  \end{alltt}
  \vspace{-7ex}
  \hket
  \vspace{-2mm}
  \caption{Syntax of $\wlang$ \label{fig:wlang_syntax}}
\end{figure}

\noindent{\bf Semantics.} 
We begin by defining the semantics of a single thread in $\wlang$, and
then extend the definition to concurrent non-preemptive and preemptive
semantics. Note that in our work, reads and writes are assumed to execute 
atomically and further, we assume a sequentially consistent memory
model.

\noindent{\em Single-thread semantics.}
A program state is given by
$\langle \VariableValuation, \Prog \rangle$ where $\VariableValuation$
is a valuation of all program variables, and $\Prog$ is the statement
that remains to be executed.
Let us fix a thread identifier $\tid$.

The operational semantics of a thread executing in isolation
is given in \autoref{fig:single_thread_semantics}.
A single execution step $\smallstepx{\VariableValuation,
\Prog}{\VariableValuation', \Prog'}{\alpha}$ changes the program state
from $\langle\VariableValuation, \Prog\rangle$ to $\langle\VariableValuation', \Prog'\rangle$ while
optionally outputting an {\em observable symbol} $\alpha$.
The absence of a symbol is denoted using $\epsilon$.
Most rules from \autoref{fig:single_thread_semantics} are
standard---the special rules are the {\sc Havoc}, {\sc
Input}, and {\sc Output} rules.
\begin{compactenum}
\item {\sc Havoc}: Statement $\loc: \Asgn{x}{\Havoc}$ assigns $x$ a non-deterministic
  value (say $k$) and outputs the observable $(\tid, \Havoc, k,
  x)$.
\item {\sc Input}, {\sc Output}: $\loc: \Asgn{x}{\Input{t}}$ and $\loc: \Output{t}{\e}$
  read and write values to the channel $t$, and output $(\tid, \mathsf{input}, k, t)$
  and $(\tid,
  \mathsf{output}, k, t)$, where $k$ is the value read
  or written, respectively.
\end{compactenum}

Intuitively, the observables record the sequence of non-deterministic
guesses, as well as the input/output interaction with the tagged
channels. In the following, $\e$ represents an expression and $\evalExpression{\e}$ evaluates an
expression by replacing all variables $v$ with their values in $\VariableValuation$.
\begin{figure}
  \caption{Single thread semantics of $\wlang$\label{fig:single_thread_semantics}}
  \begin{display}{} 
    \staterule{Assign}
    {\evalExpression{\e} = k}
    {\smallstepi{\loc:\ \Asgn{x}{\e}}{\Skip}{\epsilon}{x:=k}}
    \staterule{Havoc}
    {k \in \natset \qquad \alpha = (\tid, \Havoc, k, x)}
    {\smallstepi{\loc:\ \Asgn{x}{\Havoc}}{\Skip}{\alpha}{x:=k}}
    \\[\GAP]
    \staterule{While1}
    {\evalExpression{\e} = \falses}
    {\smallstepo{\loc:\ \eWhile{\e}{s}}{\Skip}{\epsilon}}
    \staterule{While2}
    {\evalExpression{\e} = \trues}
    {\smallstepo{\loc:\ \eWhile{\e}{s}}{\Seq{s}{\eWhile{\e}{s}}}{\epsilon}}
    \\[\GAP]
    \staterule{If1}
    {\evalExpression{\e} = \trues}
    {\smallstepo{\loc:\ \eIf{\e}{s_1}{s_2}}{s_1}{\epsilon}}
    \staterule{If2}
    {\evalExpression{\e} = \falses}
    {\smallstepo{\loc:\ \eIf{\e}{s_1}{s_2}}{s_2}\epsilon}
    \\[\GAP]
    \staterule{Sequence}
    {\smallstepx{\VariableValuation,s_1}{\VariableValuation',s_1'}{\alpha}}
    {\smallstepx{\VariableValuation,\loc:\ \Seq{s_1}{s_2}}{\VariableValuation', \Seq{s_1'}{s_2}}{\alpha}}
    \staterule{Input}
    {k \in \natset \qquad \alpha = (\tid,\mathsf{input}, k,t)}
    {\smallstepi{\loc:\ \Asgn{x}{\Input{t}}}{\Skip}{\alpha}{x:=k}}
    \\[\GAP]
    \staterule{Skip}
    {}
    {\smallstep{\VariableValuation,\loc:\ \Seq{\Skip}{s_2}}{\VariableValuation,{s_2}}}
    \staterule{Output}
    {\evalExpression{\e} = k \qquad \alpha=(\tid,\mathsf{output}, k,t)}
    {\smallstepo{\loc:\ \Output{t}{\e}}{\Skip}{\alpha}}
  \end{display}
  \vspace{-4ex}
\end{figure}

\noindent{\em Non-preemptive semantics.}
The non-preemptive semantics of $\wlang$ is presented in
\ifappendix
Appendix~\ref{app:semantics}.
\else
the full version \cite{fullversion}.
\fi
The non-preemptive semantics ensures that a single thread from the
program keeps executing as detailed above until one of the following
occurs:
\begin{inparaenum}[(a)]
\item the thread finishes execution, or it encounters
\item a {\sf yield} statement, or
\item a {\sf lock} statement and the lock is taken, or
\item an {\sf await} statement and the condition variable is not set.
\end{inparaenum}
In these cases, a context-switch is possible.

\noindent{\em Preemptive semantics.}
The preemptive semantics of a program is obtained from the
non-preemptive semantics by relaxing the condition on context-switches,
and allowing context-switches at all program points (see
\ifappendix
Appendix~\ref{app:semantics}).
\else
full version \cite{fullversion}).
\fi

\subsection{Problem statement}

A {\em non-preemptive observation sequence} of a program
$\cProg$ is a sequence
$\alpha_0\ldots\alpha_k$ if there exist program states
$\ProgState_0^{pre}$, $\ProgState_0^{post}$, \ldots,
$\ProgState_k^{pre}$, $\ProgState_k^{post}$ such that according to the
non-preemptive semantics of $\wlang$, we have:
\begin{inparaenum}[(a)]
\item for each $0 \leq i \leq k$,
  $\smallstepx{\ProgState_i^{pre}}{\ProgState_i^{post}}{\alpha_i}$,
\item for each $0 \leq i < k$, 
  $\smallstepxs{\ProgState_i^{post}}{\ProgState_{i+1}^{pre}}{\epsilon}$,
  and 
\item for the initial state $\ProgState_\iota$ and a final state (i.e., where
  all threads have finished execution) $\ProgState_f$, 
  $\smallstepxs{\ProgState_{\iota}}{\ProgState_0^{pre}}{\epsilon}$ and 
  $\smallstepxs{\ProgState_k^{post}}{\ProgState_{f}}{\epsilon}$.
\end{inparaenum}
Similarly, a {\em preemptive observation sequence} of a program 
$\cProg$ is a sequence
$\alpha_0\ldots\alpha_k$ as above, with the non-preemptive semantics
replaced with preemptive semantics.
We denote the sets of non-preemptive and preemptive observation
sequences of a program $\cProg$ by $\sem{\cProg}^{NP}$ and
$\sem{\cProg}^P$, respectively.

We say that observation sequences $\alpha_0\ldots\alpha_k$
and $\beta_0\ldots\beta_k$ are {\em equivalent} if:
\begin{compactitem}
\item The subsequences of $\alpha_0\ldots\alpha_k$ and
  $\beta_0\ldots\beta_k$ containing only symbols of the form
  $(\tid, \mathsf{Input}, k, t)$ and $(\tid, \mathsf{Output}, k,
  t)$ are equal, and
\item For each thread identifier $\tid$, the subsequences of
  $\alpha_0\ldots\alpha_k$ and $\beta_0\ldots\beta_k$ containing only
  symbols of the form $(\tid, \mathsf{Havoc}, k, x)$ are equal.
\end{compactitem}
Intuitively, observable sequences are equivalent if they have the same
interaction with the interface, and the same non-deterministic choices
in each thread.
For sets of observable sequences $\mathcal{O}_1$ and $\mathcal{O}_2$, we
write $\mathcal{O}_1 \subseteq \mathcal{O}_2$ to denote that each sequence
in $\mathcal{O}_1$ has an equivalent sequence in
$\mathcal{O}_2$.
Given a concurrent program $\cProg$ 
and a synthesized program $\cProg'$ obtained by adding synchronization 
to $\cProg$, the program $\cProg'$ is {\em preemption-safe} w.r.t. $\cProg$ if 
$\sem{\cProg'}^{P} \subseteq \sem{\cProg}^{NP}$.

We are now ready to state our synthesis
problem.
Given a concurrent program $\cProg$, the aim is 
to synthesize a program $\cProg'$, by adding synchronization to $\cProg$,
such that $\cProg'$ is preemption-safe w.r.t. $\cProg$.

\subsection{Language Inclusion Modulo an Independence Relation}
\label{subsec:langincdef}

We reduce the problem of checking if a synthesized solution
is preemption-safe w.r.t. the original program to an
automata-theoretic problem.

\noindent{\bf Abstract semantics for $\wlang$.}
We first define a single-thread abstract semantics for $\wlang$ (\autoref{fig:abstract_semantics}), which 
tracks types of accesses (read or write) to each memory location
while abstracting away their values.
Inputs/outputs to an external interface are modeled as writes to a special memory location (\texttt{dev}).
Even inputs are modeled as writes because in our applications we cannot assume that reads from 
the external interface are free of side-effects. Havocs become
ordinary writes to the variable they are assigned to.
Every branch is taken non-deterministically and tracked. The only constructs
preserved are the lock and condition variables.
The abstract program state consists of the valuations of the lock and
condition variables and the statement that remains to be executed.
In the abstraction, an observable is of the form $(\tid, \{\mathsf{read,write,exit,loop,then,else}\}, v, \loc)$
and observes the type of access (read/write) to variable $v$ and records
non-deterministic branching choices (exit/loop/then/else). The latter are not associated with any variable.

In \autoref{fig:abstract_semantics}, given expression $\e$, the function $\reads{\tid, \e, \loc}$ represents the sequence
$(\tid, \mathsf{read}, v_1, \loc)\cdot\ldots\cdot (\tid, \mathsf{read}, v_n, \loc)$ where $v_1,\ldots, v_n$ are
the variables in $\e$, in the order they are read to evaluate $\e$.

\begin{figure}
  \caption{Single thread abstract semantics of $\wlang$\label{fig:abstract_semantics}}
  \begin{display}{} 
    \staterule{Assign}
    {\alpha=\reads{\tid,\e,\loc}\cdot(\tid, \mathsf{write}, x, \loc)}
    {\smallstepo{\loc:\ \Asgn{x}{\e}}{\Skip}{\alpha}}
    \staterule{Havoc}
    {\alpha=(\tid,\mathsf{write},x,\loc)}
    {\smallstepo{\loc:\ \Asgn{x}{\Havoc}}{\Skip}{\alpha}}
    \\[\GAP]
    \staterule{While1}
    {\alpha=\reads{\tid,\e,\l}\cdot (\tid, \mathsf{exit},\textvisiblespace, \loc)}
    {\smallstepo{\loc:\ \eWhile{\e}{s}}{\Skip}{\alpha}}
    \staterule{While2}
    {\alpha=\reads{\tid,\e,\l}\cdot (\tid, \mathsf{loop},\textvisiblespace , \loc)}
    {\smallstepo{\loc:\ \eWhile{\e}{s}}{\Seq{s}{\eWhile{\e}{s}}}{\alpha}}
    \\[\GAP]
    \staterule{If1}
    {\alpha=\reads{\tid,\e,\l}\cdot (\tid, \mathsf{then},\textvisiblespace, \loc)}
    {\smallstepo{\loc:\ \eIf{\e}{s_1}{s_2}}{s_1}{\alpha}}
    \staterule{If2}
    {\alpha=\reads{\tid,\e,\l}\cdot (\tid, \mathsf{else},\textvisiblespace, \loc)}
    {\smallstepo{\loc:\ \eIf{\e}{s_1}{s_2}}{s_2}\alpha}
    \\[\GAP]
    \staterule{Sequence}
    {\smallstepx{\VariableValuation,s_1}{\VariableValuation',s_1'}{\alpha}}
    {\smallstepx{\VariableValuation,\loc:\ \Seq{s_1}{s_2}}{\VariableValuation', \Seq{s_1'}{s_2}}{\alpha}}
    \staterule{Input}
    {\alpha=(\tid,\mathsf{write},\mathtt{dev},\loc)\cdot(\tid,\mathsf{write},x,\loc)}
    {\smallstepx{\loc:\ \Asgn{x}{\Input{t}}}{\Skip}{\alpha}}
    \\[\GAP]
    \staterule{Skip}
    {}
    {\smallstep{\VariableValuation,\loc:\ \Seq{\Skip}{s_2}}{\VariableValuation,{s_2}}}
    \staterule{Output}
    {\alpha=\reads{\tid,\e,\loc}\cdot(\tid,\mathsf{write},\mathtt{dev},\loc)}
    {\smallstepo{\loc:\ \Output{t}{\e}}{\Skip}{\alpha}}
  \end{display}
  \vspace{-4ex}
\end{figure}

The abstract program semantics (Figures \ref{fig:nonpreemptive_semantics} and \ref{fig:preemptive_semantics}) is the same as the concrete program semantics
where the single thread semantics is replaced by the abstract single thread semantics. Locks and conditionals and operations on them are not abstracted.

As with the concrete semantics of $\wlang$, we can define the
non-preemptive and preemptive observable sequences for abstract
semantics.
For a concurrent program $\cProg$, we denote the sets of abstract
preemptive and non-preemptive observable sequences by
$\sem{\cProg}^{P}_{abs}$ and $\sem{\cProg}^{NP}_{abs}$, respectively.

Abstract observation sequences $\alpha_0\ldots\alpha_k$
and $\beta_0\ldots\beta_k$ are {\em equivalent} if:
\begin{compactitem}
\item For each thread $\tid$, the subsequences of $\alpha_0\ldots\alpha_k$ and $\beta_0\ldots\beta_k$ containing only
  symbols of the form $(\tid,a,v,\loc)$, with $a\in \{\mathsf{read,write,exit,loop,then,else}\}$ are equal,
\item For each variable $v$, the subsequences of
  $\alpha_0\ldots\alpha_k$ and $\beta_0\ldots\beta_k$ containing only
  write symbols (of the form $(\tid, \mathsf{write}, v, \loc)$) are equal, and
\item For each variable $v$, the multisets of symbols of the form 
  $(\tid, \mathsf{read}, v, \loc)$ between any two write symbols, as well
  as before the first write symbol and after the last write symbol are 
  identical.
\end{compactitem}

\noindent We first show that the abstract semantics is sound w.r.t. preemption-safety 
\ifappendix
(see Appendix~\ref{app:proof} for the proof).
\else
(see full version for the proof \cite{fullversion}).
\fi
\begin{restatable}{theorem}{correctness}
\label{thm:correctness}
Given concurrent program $\cProg$ and a synthesized program $\cProg'$ 
obtained by adding synchronization to $\cProg$, 
$\sem{\cProg'}^{P}_{abs} \subseteq \sem{\cProg}^{NP}_{abs}\Rightarrow
\sem{\cProg'}^{P} \subseteq \sem{\cProg}^{NP}$.
\end{restatable}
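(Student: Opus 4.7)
\bigskip
\noindent\textbf{Proof plan.}
The plan is to establish the implication by a two-way correspondence between concrete and abstract executions: every concrete execution has a canonical abstraction, and, conversely, every abstract execution can be concretized by choosing values consistent with its read/write structure. Concretely, given a concrete preemptive observation sequence $\sigma \in \sem{\cProg'}^P$, I will (i) abstract $\sigma$ to some $\hat\sigma \in \sem{\cProg'}^P_{abs}$; (ii) invoke the hypothesis to obtain $\hat\tau \in \sem{\cProg}^{NP}_{abs}$ equivalent (in the abstract sense) to $\hat\sigma$; (iii) concretize $\hat\tau$ into a concrete non-preemptive sequence $\tau \in \sem{\cProg}^{NP}$; and finally (iv) show that $\tau$ and $\sigma$ are equivalent in the concrete sense (same per-thread $\Havoc$ observables and same $\mathsf{input}$/$\mathsf{output}$ subsequence on the tagged channel).

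For step (i), I would define an abstraction map $\alpha$ on concrete single-step derivations by translating \textsc{Assign}/\textsc{Havoc}/\textsc{If}/\textsc{While}/\textsc{Input}/\textsc{Output} derivations to the corresponding abstract derivations using the same control-flow choices (branch taken, loop iteration taken) and replacing values by the read/write events prescribed in Fig.~\ref{fig:abstract_semantics}. A routine induction on the derivation shows $\alpha$ lifts to a map $\sem{\cProg'}^P \to \sem{\cProg'}^P_{abs}$, and analogously for the non-preemptive semantics. The lock/condition constructs are untouched by abstraction, so schedules that are legal concretely remain legal abstractly and vice versa; in particular the distinction between $P$ and $NP$ carries through.

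For step (iii), the concretization is the substantive content of the proof. Given $\hat\tau \in \sem{\cProg}^{NP}_{abs}$, I replay $\hat\tau$ on the concrete semantics of $\cProg$ as follows: every $\Havoc$ or $\mathsf{input}$ event by thread $\tid$ is assigned the value that the corresponding per-thread event had in $\sigma$ (this is well-defined because abstract equivalence preserves, for each thread $\tid$, the entire per-thread projection of the trace, and hence the sequence of positions at which that thread performs $\Havoc$/$\mathsf{input}$ events matches between $\hat\sigma$ and $\hat\tau$). Values read from ordinary variables are determined by the most recent preceding write event; here I use the fact that abstract equivalence preserves, for every variable $v$, the sequence of writes to $v$ (same writers, same order) and the multiset of reads between consecutive writes. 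Because any two abstract-equivalent traces have the same write-timeline per variable, every read returns the same value in $\tau$ as in $\sigma$. Consequently, evaluated expressions, branch guards, and hence output values all coincide, yielding $\tau \equiv \sigma$ as concrete observation sequences.

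The main obstacle is the last claim that $\tau$ is in fact a legal \emph{non-preemptive} execution of $\cProg$, not just a possible symbolic replay. This requires arguing that the scheduling constraints of the non-preemptive semantics (a thread runs until it yields, blocks on a lock, or waits on a condition variable) are satisfied by the replay, and that the lock/condition-variable valuations encountered agree with those in $\hat\tau$. Since locks and condition variables are not abstracted, $\hat\tau$ already carries a witnessing non-preemptive schedule; I transport that schedule verbatim to $\tau$ and check that the concrete values I have substituted do not affect any lock/yield/await transition (these transitions depend only on the lock/condition state, which is preserved). Once this is verified, $\tau \in \sem{\cProg}^{NP}$ and $\sigma \equiv \tau$ give $\sem{\cProg'}^P \subseteq \sem{\cProg}^{NP}$, completing the proof.
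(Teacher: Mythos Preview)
Your proposal is correct and follows essentially the same approach as the paper's own proof: abstract the concrete preemptive run, invoke the hypothesis to obtain an equivalent abstract non-preemptive run, and then argue that this abstract run concretizes to a feasible, concretely-equivalent non-preemptive run of $\cProg$. The paper compresses your steps (iii)--(iv) into a single asserted observation (equivalent abstract sequences correspond to executions over the same control-flow paths with the same data-flow into $\Havoc$/input/output, hence either both or neither concretize, and if both then to equivalent concrete sequences), whereas you spell out the concretization explicitly via the per-variable write-timeline and per-thread projection arguments; but the underlying idea is the same.
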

  
\noindent{\bf Abstract semantics to automata.}
An \nfa\ $\Aut$ is a tuple $(Q, \Sigma, \Delta, Q_\iota, F)$ where
$\Sigma$ is a finite alphabet, $Q,Q_\iota,F$ are finite sets of states, initial states and final states, respectively and $\Delta
$ is a set of transitions.
A word $\sigma_0\ldots\sigma_k \in \Sigma^*$ is {\em accepted} by
$\Aut$ if there exists a sequence of states $q_0\ldots q_{k+1}$
such that $q_0\in Q_\iota$ and $q_{k+1}\in F$ and $\forall i:(q_i, \sigma_i, q_{i+1}) \in \Delta$.
The set of all words accepted by $\Aut$ is called the
language of $\Aut$ and is denoted $\mathcal{L}(\Aut)$.

Given a program $\cProg$, we can construct automata
$\Aut(\sem{\cProg}^{NP}_{abs})$ and $\Aut(\sem{\cProg}^{P}_{abs})$ 
that accept exactly the observable sequences under the respective semantics.
We describe their construction informally.
Each automaton state is a program state of the abstract
semantics
and the alphabet is the set of abstract observable symbols.
There is a transition from one state to another on an observable symbol
(or an $\epsilon$)
iff the program can execute one step under the corresponding
semantics to reach the other state while outputting the observable
symbol (on an $\epsilon$).

\noindent{\bf Language inclusion \upto~an independence relation.}
Let $I$ be a non-reflexive, symmetric
binary relation over an alphabet $\Sigma$. 
We refer to $I$ as the {\em independence relation} and to elements of
$I$ as {\em independent} symbol pairs.
We define a symmetric binary relation $\approx$ over words in
$\Sigma^*$: for all words $\sigma, \sigma'  \in \Sigma^*$ and $(\alpha,
\beta) \in I$, $(\sigma \cdot \alpha \beta \cdot \sigma', \sigma \cdot
\beta \alpha \cdot \sigma') \in \, \approx$.
%
Let $\approx^t$ denote the reflexive transitive closure of
$\approx$.\footnote{The equivalence classes of $\approx^t$ are
Mazurkiewicz traces.}
Given a language ${\cal L}$ over $\Sigma$, the closure of ${\cal L}$
w.r.t. $I$, denoted $\CI({\cal L})$, is the set $\{\sigma \in \Sigma^* {:}\ \exists
\sigma' \in \cal L \text{ with } (\sigma,\sigma') \in \, \mbox{$\approx^t$}\}$. 
Thus, $\CI({\cal L})$ consists of all words that can be obtained from
some word in ${\cal L}$ by repeatedly commuting adjacent independent
symbol pairs from $I$.


\begin{definition}[Language inclusion \upto~an independence relation]
\label{def:langinc}
Given \nfas\ $A,B$ over a common
alphabet $\Sigma$ and an independence relation $I$ over $\Sigma$,
the language inclusion problem \upto~$I$ is: $\lang A
\subseteq \CI(\lang B)$?
\end{definition}

We reduce preemption-safety under the abstract
semantics to 
language inclusion modulo an independence relation.
The independence relation $I$ we use is defined on the set of abstract
observable symbols as follows: $((\tid, a,v,
\loc), (\tid', a',v', \loc')) \in I$ iff 
$\tid \neq \tid'$, and one of the following holds: 
\begin{inparaenum}[(a)]
\item $v \neq v'$ or
\item $a \neq \mathsf{write} \wedge a'\neq
  \mathsf{write}$.
\end{inparaenum}

\begin{proposition}
\label{prop:inclusion}
  Given concurrent programs $\cProg$ and $\cProg'$, 
  $\sem{\cProg'}^{P}_{abs} \subseteq \sem{\cProg}^{NP}_{abs}$ iff
  $\Lang{\Aut(\sem{\cProg'}^{P}_{abs})} \subseteq \CI(\Lang{\Aut(\sem{\cProg}^{NP}_{abs})})$.
\end{proposition}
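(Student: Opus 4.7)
The plan is to decompose the proof into two conceptual steps: (i) relate each automaton's language to the corresponding set of abstract observation sequences, and (ii) show that the paper's equivalence relation on abstract observation sequences coincides with Mazurkiewicz equivalence $\approx^t$ induced by the independence relation $I$. Once both steps are in place, the proposition follows almost immediately from unfolding the definitions of $\subseteq$ on observation-sequence sets and of $\CI$.

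For step (i), I would appeal to the construction of $\Aut(\sem{\cProg}^{NP}_{abs})$ and $\Aut(\sem{\cProg'}^{P}_{abs})$ sketched just before the proposition: states are abstract program states, transitions fire exactly when the abstract semantics executes a step emitting the corresponding observable (with $\epsilon$-transitions for silent steps). A straightforward induction on the length of an accepting run of each automaton, matched with a corresponding sequence of abstract semantic steps (and vice versa), gives $\Lang{\Aut(\sem{\cProg}^{NP}_{abs})} = \sem{\cProg}^{NP}_{abs}$ and similarly for $\cProg'$ under preemption. This step is routine bookkeeping.

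For step (ii), I would argue both inclusions. The forward direction ($\sigma \approx^t \sigma'$ implies equivalence) proceeds by induction on the number of adjacent commutations in $I$; a single swap of an independent pair $(\alpha,\beta) \in I$ preserves each per-thread projection (because $I$ requires $\tid \neq \tid'$), preserves each per-variable write subsequence (because swapping a write with anything touching the same variable is blocked by the definition of $I$), and preserves each inter-write read multiset (for the same reason). The reverse direction—showing that any two sequences satisfying the three projection-style conditions of abstract equivalence are related by $\approx^t$—is the heart of the argument and the main obstacle. I would proceed by induction on the sequence length: given $\sigma$ and $\sigma'$ satisfying the three conditions, let $\alpha$ be the first symbol of $\sigma'$; the projection conditions ensure that $\alpha$ also appears in $\sigma$, say at position $j$, and that every symbol preceding $\alpha$ in $\sigma$ is $I$-independent of $\alpha$ (a thread-different symbol accessing either a different variable, or not a write while $\alpha$ is not a write either; and the write/read-multiset conditions rule out the problematic cases). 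One can thus commute $\alpha$ to the front via adjacent swaps in $I$, drop it from both sides, and invoke the inductive hypothesis. This kind of ``bubble to the front'' argument is the standard way to obtain a Mazurkiewicz-style normal form, but some care is needed to justify that no blocked pair is ever encountered, exactly because the three projection conditions collectively encode the dependence relation $D = (\Sigma \times \Sigma) \setminus I$.

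Finally, putting the two steps together: $\sem{\cProg'}^{P}_{abs} \subseteq \sem{\cProg}^{NP}_{abs}$ means, by the paper's definition of $\subseteq$ on observation-sequence sets, that every $\sigma \in \sem{\cProg'}^{P}_{abs}$ has an equivalent sequence $\sigma' \in \sem{\cProg}^{NP}_{abs}$. By step (ii) this is equivalent to saying $\sigma \approx^t \sigma'$ for some such $\sigma'$, i.e.\ $\sigma \in \CI(\sem{\cProg}^{NP}_{abs})$; and by step (i) this is exactly $\Lang{\Aut(\sem{\cProg'}^{P}_{abs})} \subseteq \CI(\Lang{\Aut(\sem{\cProg}^{NP}_{abs})})$. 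Both implications being tight, we obtain the ``iff''.
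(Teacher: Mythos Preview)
The paper does not actually supply a proof of this proposition; it is stated immediately after the definition of $I$ and left implicit, presumably because the authors regard it as a direct unfolding of the definitions (the automata are built so that their languages \emph{are} the abstract observation-sequence sets, and the three clauses defining abstract equivalence are tailored to coincide with trace equivalence for the chosen $I$). Your proposal is therefore not competing with an existing argument but filling in a gap the paper leaves open, and it does so correctly: step~(i) is exactly the intended reading of the automaton construction, and step~(ii) is the substantive point. Your ``bubble to the front'' induction works; the case analysis you outline (same-thread symbols handled by the per-thread clause, same-variable write/write and read/write pairs handled by the write-subsequence and read-multiset clauses respectively) is precisely what is needed to show that the first symbol of $\sigma'$ can be commuted to the front of $\sigma$ without ever swapping a dependent pair. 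One small point worth making explicit in a write-up is that symbols may repeat (loops), so you should be clear that you always target the \emph{first} occurrence of $\alpha$ in $\sigma$ and argue that the three projection conditions are preserved on the residual suffixes after stripping that occurrence from both words---this is routine but easy to gloss over.
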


\section{Checking Language Inclusion}
\label{sec:langinc}
We first focus on the problem of language inclusion \upto~an independence relation (\defref{langinc}).  This question corresponds to preemption-safety (\thmref{correctness}, \propref{inclusion}) and its solution drives our synchronization synthesis (\secref{algo}).

\begin{theorem}\label{thm:langinc.undecidable}
For \nfas~$A, B$ over alphabet $\Sigma$ and an independence relation $I\subseteq \Sigma\times\Sigma$, $\Lang \A \subseteq \CI(\Lang \B)$ is undecidable\cite{bertoni1982equivalence}.  
\end{theorem}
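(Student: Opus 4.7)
The plan is to reduce from the Post Correspondence Problem (PCP), which is a well-known undecidable problem: given pairs $(u_1,v_1),\ldots,(u_n,v_n)$ over a finite alphabet $\Sigma$, decide whether there exists a nonempty sequence of indices $i_1,\ldots,i_k$ with $u_{i_1}\cdots u_{i_k} = v_{i_1}\cdots v_{i_k}$. We will construct, from a PCP instance $P$, two \nfas\ $\A_P, \B_P$ and an independence relation $I$ such that $\Lang{\A_P} \subseteq \CI(\Lang{\B_P})$ holds iff $P$ has \emph{no} solution. Decidability of the inclusion would then contradict undecidability of PCP.

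The first step is to fix the alphabet and the independence relation. Take a disjoint ``barred'' copy $\bar\Sigma = \{\bar a : a \in \Sigma\}$ and let $\Gamma = \Sigma \cup \bar\Sigma$. Define $I$ to be exactly the symmetric relation that makes every unbarred letter independent of every barred one, i.e.\ $I = \{(a,\bar b),(\bar b,a) : a,b\in\Sigma\}$. The crucial structural property of this $I$ is that commutations can only swap unbarred with barred letters, so the closure operator $\CI$ preserves both the subword of unbarred letters and the subword of barred letters of any word over $\Gamma$. Consequently, every word in $\Gamma^*$ has a unique normal form (all unbarred letters first, in order, followed by all barred letters, in order) in its $\approx^t$-equivalence class.

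Next I construct the two \nfas. Let $\A_P$ accept the regular language
\[
  L_A \;=\; \bigl\{\,u_{i_1}\bar v_{i_1}\,u_{i_2}\bar v_{i_2}\cdots u_{i_k}\bar v_{i_k} \;:\; k\ge 1,\; i_j\in\{1,\ldots,n\}\,\bigr\},
\]
which is trivially recognized by a small \nfa\ that nondeterministically picks an index and emits $u_i$ followed by $\bar v_i$. Let $\B_P$ accept
\[
  L_B \;=\; \bigl\{\,w\,\bar w' \;:\; w,w'\in\Sigma^*,\; w\neq w'\,\bigr\},
\]
which is \nfa-recognizable by the standard construction that nondeterministically either guesses two differing positions or witnesses a length mismatch between the unbarred prefix and the barred suffix.

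Finally I would verify the reduction's correctness. By the normal-form observation above, any word $\pi\in L_A$ satisfies $\pi\approx^t u_{i_1}\cdots u_{i_k}\cdot \bar v_{i_1}\cdots \bar v_{i_k}$, and this is the \emph{only} word in $\CI(\{\pi\})\cap\Sigma^*\bar\Sigma^*$. Hence $\pi\in\CI(L_B)$ iff $u_{i_1}\cdots u_{i_k}\neq v_{i_1}\cdots v_{i_k}$. Therefore $L_A\subseteq\CI(L_B)$ holds exactly when no index sequence yields a PCP match, which is the complement of PCP and thus undecidable. The main obstacle in making this rigorous is precisely the structural claim about $\CI$: one must argue carefully that no sequence of allowed adjacent swaps can alter the relative order of two unbarred (resp.\ two barred) letters, which follows by a straightforward induction on the length of the derivation witnessing $\pi\approx^t\pi'$, using that every generating swap is between an unbarred and a barred symbol.
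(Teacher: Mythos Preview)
Your reduction idea is sound, but there is a genuine gap: the language
\[
  L_B \;=\; \{\, w\,\bar w' \;:\; w,w'\in\Sigma^*,\ w\neq w'\,\}
\]
is \emph{not} regular, so $\B_P$ is not an \nfa. To see this, intersect $L_B$ with the regular language $a^*\bar a^*$ for some $a\in\Sigma$: you obtain $\{a^n\bar a^m : n\neq m\}$, whose complement within $a^*\bar a^*$ is $\{a^n\bar a^n : n\ge 0\}$, which is not regular. Hence $L_B$ cannot be regular. The ``standard construction'' you allude to (guess a mismatching position or a length difference) is the classical \emph{pushdown} construction showing that $\{w\#w' : w\neq w'\}$ is context-free; it does not yield an \nfa, because locating ``the same position'' in $w'$ requires counting.

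The fix is straightforward and keeps your overall plan intact. With your independence relation, the trace monoid is $\Sigma^*\times\Sigma^*$, and the trace class of any $\pi\in\Gamma^*$ is determined by the pair of projections $(\pi_\Sigma(\pi),\pi_{\bar\Sigma}(\pi))$. You therefore only need a \emph{regular} $L_B'$ whose image in $\Sigma^*\times\Sigma^*$ is the inequality relation $R_{\neq}=\{(u,v):u\neq v\}$; then $\CI(L_B')=\{\pi:\pi_\Sigma(\pi)\neq\pi_{\bar\Sigma}(\pi)\}$ as desired. Writing $\mathit{Match}=(\bigcup_{a\in\Sigma} a\bar a)^*$, a suitable choice is
\[
  L_B' \;=\; \mathit{Match}\cdot\bigl(\textstyle\bigcup_{a\neq b} a\bar b\bigr)\cdot\Gamma^* \;\cup\; \mathit{Match}\cdot\Sigma^+ \;\cup\; \mathit{Match}\cdot\bar\Sigma^+,
\]
which is visibly regular; its image in $\Sigma^*\times\Sigma^*$ is $Id\cdot D\cdot(\Sigma^*\times\Sigma^*)\,\cup\,Id\cdot(\Sigma^+\times\{\epsilon\})\,\cup\,Id\cdot(\{\epsilon\}\times\Sigma^+)=R_{\neq}$. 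With this $L_B'$ in place of your $L_B$, the remainder of your argument goes through unchanged.

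For comparison with the paper: the paper does not give a proof of this theorem at all, deferring entirely to the cited reference; so your (corrected) PCP reduction supplies strictly more than the paper does.
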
 

Fortunately, a bounded version of the problem is decidable. 
Recall the relation $\approx$ over $\Sigma^*$ from \subsecref{langincdef}. 
We define a symmetric binary relation $\approx_i$ over $\Sigma^*$:
$(\sigma, \sigma') \in \, \approx_i$ iff $\exists (\alpha,\beta) \in I$: 
$(\sigma, \sigma') \in \, \approx$, $\sigma[i] = \sigma'[i+1] = \alpha$ and 
$\sigma[i+1] = \sigma'[i] = \beta$.
Thus $\approx^i$ consists of all words that can be optained from each other by commuting 
the symbols at positions $i$ and $i+1$.
We next define a symmetric binary relation $\asymp$ over $\Sigma^*$: 
$(\sigma,\sigma') \in \, \asymp$ iff $\exists \sigma_1,\ldots,\sigma_t$: 
$(\sigma,\sigma_1) \in \, \approx_{i_1},\ldots, (\sigma_{t},\sigma') \in \, \approx_{i_{t+1}}$ 
and $i_1 < \ldots < i_{t+1}$.
The relation $\asymp$ intuitively consists of words obtained from each other
by making a single forward pass commuting multiple pairs of adjacent symbols.
Let $\asymp^k$ denote the $k$-composition of $\asymp$ with itself. 
Given a language ${\cal L}$ over $\Sigma$, 
we use $\CIk({\cal L})$ to denote the set $\{\sigma \in \Sigma^*: \exists \sigma' \in \cal L \text{ with } (\sigma,\sigma') \in \, \asymp^{\mbox{$\scriptstyle k$}}  \}$.
In other words, $\CIk({\cal L})$ consists of all words which can be generated from 
${\cal L}$ using a finite-state transducer that remembers at most $k$ symbols 
of its input words in its states.  

\begin{definition}[Bounded language inclusion \upto~an independence relation]
Given \nfas~$\A, \B$ over $\Sigma$, $I\subseteq \Sigma\times\Sigma$ and a constant $k>0$, the $k$-bounded language inclusion problem \upto~$I$ is: $\lang \A \subseteq \CIk(\lang \B)$? 
\end{definition}

\begin{theorem}\label{thm:boundlanginc}
For \nfas~$\A, \B$ over $\Sigma$, $I\subseteq \Sigma\times\Sigma$ and a constant $k>0$,  
$\lang A \subseteq \CIk(\lang B)$ is decidable. 
\end{theorem}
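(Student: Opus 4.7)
The plan is to reduce the $k$-bounded language inclusion problem to standard NFA language inclusion by showing that $\CIk(\Lang \B)$ is itself regular and effectively constructible as an NFA, after which classical inclusion algorithms apply.

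The key step is to build a finite-state transducer $T$ realizing one application of $\asymp$. The transducer processes its input left-to-right using a one-symbol buffer, with states $\Sigma \cup \{\bot\}$ recording the currently buffered symbol (or $\bot$ when empty). Upon reading a new input symbol $a$ while the buffer holds $b$, $T$ nondeterministically either (i) outputs $b$ and moves $a$ into the buffer, committing to no swap at the current position, or (ii) when $(a,b)\in I$, outputs $a$ and keeps $b$ in the buffer, committing to a swap. At end-of-input the buffer is flushed. The strict ordering $i_1 < \ldots < i_{t+1}$ in the definition of $\asymp$ is enforced automatically by the left-to-right scan, since already-committed output positions cannot be revisited. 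An induction on run length establishes $\{(w,w') : T \text{ outputs } w' \text{ on input } w\} = \asymp$; note in particular that the ``repeated-swap'' mode in which $b$ is kept in the buffer across several reads correctly realizes chains of consecutive swaps $i_{j+1} = i_j+1$, which drift the buffered symbol several positions to the right of its origin.

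Next, the standard transducer--NFA product yields an NFA $\B_1$ with $\Lang{\B_1} = \bigcup_{w\in \Lang\B}\{w' : w \asymp w'\}$. Iterating, set $\B_j$ to be the product of $T$ with $\B_{j-1}$, for $j=1,\ldots,k$, with $\B_0 = \B$. A routine induction on $j$ using the definition of $\asymp^k$ shows $\Lang{\B_k} = \CIk(\Lang\B)$. The resulting NFA has size at most $|\B|\cdot(|\Sigma|+1)^k$, matching the intuition that $\CIk$ is realized by a transducer whose states remember at most $k$ symbols. Decidability of $\Lang \A \subseteq \Lang{\B_k}$ then follows from classical (PSPACE) decidability of NFA language inclusion; in practice the antichain-based algorithm used later in the paper can be applied directly.

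The main obstacle will be verifying the transducer $T$: both directions of $\{(w,w') : w' \in T(w)\} = \asymp$ need care. One must check that every accepting run of $T$ can be decoded into a valid chain of independent adjacent transpositions at strictly increasing positions of the intermediate words, and conversely that every such chain is exhibited by some run; the subtlety is that when $T$ swaps several times in a row with the same buffered symbol, the corresponding swap indices $i_j$ in the evolving intermediate words are $i_1 < i_1+1 < i_1+2 < \ldots$, and one must confirm this is exactly what the definition of $\asymp$ allows.
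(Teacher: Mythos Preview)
Your high-level approach coincides with the paper's: both prove decidability by constructing an NFA that accepts $\CIk(\Lang \B)$ and then invoking standard NFA language inclusion (the paper uses the antichain algorithm, but this is an efficiency concern, not a decidability one). The difference lies in how that NFA is built. The paper gives a direct one-shot construction $\autCB$ whose states are triples $(s_\B,\eta_1,\eta_2)$ with \emph{two} buffers of length at most $k$: $\eta_1$ holds symbols that $\B$ has consumed but the input has not yet matched, and $\eta_2$ holds the reverse. You instead compose a one-symbol-buffer transducer $T$ with $\B$ and iterate $k$ times. Your route is more modular and yields a smaller state estimate; the paper's route is tailored to the on-the-fly antichain exploration used later.

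There is one point you should tighten. The paper declares $\asymp$ to be symmetric, and the two-buffer construction $\autCB$ indeed allows the input to run ahead of $\B$ \emph{and} $\B$ to run ahead of the input. Your transducer $T$, however, only \emph{delays} symbols, so it realizes the forward direction of the chain definition. That direction is not symmetric on its own: with pairwise independent $a,b,c$ one has $abc \asymp bca$ via swaps at positions $1,2$, but starting from $bca$ no increasing sequence of swaps yields $abc$, and correspondingly $abc \notin T(bca)$ while $abc \in T^{-1}(bca)$. If $\asymp$ is read as the symmetric closure (as the paper's wording and construction suggest), then your claimed equality $\{(w,w') : w' \in T(w)\} = {\asymp}$ fails, and $T^k$ applied to $\Lang\B$ undershoots $\CIk(\Lang\B)$. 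The fix is painless for decidability---use the rational relation $T \cup T^{-1}$ in place of $T$ and iterate, or compose $T$ and $T^{-1}$ alternately---but your correctness argument for $T$ should be adjusted accordingly.
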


We present an algorithm to check $k$-bounded language inclusion
\upto~$I$, based on the antichain algorithm for standard language
inclusion \cite{de2006antichains}. 

\noindent{\bf Antichain algorithm for language inclusion}. 
Given a partial order $(X, \sqsubseteq)$, an antichain over $X$ is
a set of elements of $X$ that are incomparable w.r.t. $\sqsubseteq$. In order
to check $\Lang \A \subseteq \CI(\Lang \B)$ for \nfas~$A =
(Q_\A,\Sigma,\Delta_\A,Q_{\iota,\A},F_\A)$ and $B = (Q_\B,\Sigma,\Delta_\B,Q_{\iota,\B},F_\B)$,
the antichain algorithm proceeds by exploring $\A$ and $\B$ in lockstep.
While $\A$ is explored nondeterministically, $\B$ is determinized on the fly 
for exploration. The algorithm maintains
an antichain, consisting of tuples of the form $(s_\A, S_\B)$, where $s_\A \in Q_\A$ and $S_\B \subseteq Q_\B$. 
The ordering relation $\sqsubseteq$ is given by 
$(s_\A, S_\B) \sqsubseteq (s'_\A, S'_\B)$ iff $s_\A = s'_\A$ and $S_\B
\subseteq S'_\B$. The algorithm also maintains a {\em frontier} set of 
tuples {\em yet} to be explored. 

Given state $s_\A \in Q_\A$ and a symbol $\alpha \in \Sigma$,
let $\suc_\alpha(s_\A)$ denote $\{s_\A'
\in Q_\A : (s_\A,\alpha,s_\A') \in \Delta_\A\}$.  Given set of states $S_\B
\subseteq Q_\B$, let $\suc_\alpha(S_\B)$ denote $\{s_\B'\in Q_\B:
\exists s_\B \in S_\B:\ (s_\B,\alpha,s_\B')\in\Delta_\B\}$.  
Given tuple $(s_\A, S_\B)$ in the frontier set, let 
$\suc_\alpha(s_\A, S_\B)$ denote $\{(s'_\A,S'_\B): s'_\A \in
\suc_\alpha(s_\A), S'_\B = \suc_\alpha(s_\B)\}$. 

In each step, the antichain algorithm explores $\A$ and $\B$ by computing
$\alpha$-successors of all tuples in its current frontier set for all possible
symbols $\alpha \in \Sigma$. Whenever a tuple $(s_\A, S_\B)$ is found 
with $s_\A \in F_\A$ and $S_\B \cap F_\B=\emptyset$,
the algorithm reports a counterexample to language inclusion. Otherwise, the
algorithm updates its frontier set and antichain to include 
the newly computed successors using the two rules enumerated below. 
Given a newly computed successor tuple $p'$:
\begin{compactitem}
\item Rule 1: if there exists a tuple $p$ in the antichain 
with $p \sqsubseteq p'$, then $p'$ is not added to the frontier set or antichain,  
\item Rule 2: else, if there exist tuples $p_1, \ldots, p_n$ in the  antichain 
with $p' \sqsubseteq p_1, \ldots, p_n$, then $p_1, \ldots, p_n$ are removed from the antichain.
\end{compactitem}
The algorithm terminates by either reporting a counterexample, or by declaring
success when the frontier becomes empty.  

\noindent{\bf Antichain algorithm for $k$-bounded language inclusion \upto~$I$}. 
This algorithm is essentially the same as the standard antichain algorithm, with 
the automaton $\B$ above replaced by an automaton 
$\autCB$ accepting $\CIk(\lang \B)$. 
The set $Q_{\autCB}$ of states of $\autCB$ consists of triples $(s_\B, \early, \late)$, 
where $s_\B \in Q_\B$ and $\early, \late$ are $k$-length words over $\Sigma$. 
Intuitively, the words $\early$ and $\late$ store symbols that are expected to be matched later along a run. 
The set of initial states of $\autCB$ is $\{(s_\B,\emptyset,\emptyset): s_\B \in I_\B\}$. 
The set of final states of $\autCB$ is  $\{(s_\B,\emptyset,\emptyset): s_\B \in F_\B\}$. 
The transition relation $\Delta_{\autCB}$ is constructed by repeatedly applying the following rules, in order,  
for each state $(s_\B, \early, \late)$ and each symbol $\alpha$. In what follows, $\eta[\setminus i]$ 
denotes the word obtained from $\eta$ by removing its $i^{th}$ symbol. 
\begin{compactenum}
\item Pick {\em new} $s'_\B$ and $\beta\in\Sigma$ such that $(s_\B, \beta, s_\B') \in \Delta_\B$
\item 
\begin{inparaenum}[(a)]
	\item If $\forall i$: $\early[i] \neq \alpha$ and 
        $\alpha$ is independent of all symbols in $\early$, \\
        $\late' \, \assign \, \late \cdot \alpha$ and $\early' \, \assign \, \early$,
        \item else, if $\exists i$:  $\early[i] = \alpha$ and 
	$\alpha$ is independent of all symbols in $\early$ prior to $i$, 
        $\early' \, \assign \, \early[\setminus i]$ and $\late'\, \assign\, \late$
        \item else, go to 1
   \end{inparaenum}
\item 
\begin{inparaenum}[(a)]
        \item If $\forall i$: $\late'[i] \neq \beta$ and
	$\beta$ is independent of all symbols in $\late'$,\\
        $\early' \, \assign \early' \,\cdot \beta$, 
        \item else, if $\exists i$: $\late'[i] = \beta$ and 
	$\beta$ is independent of all symbols in $\late'$ prior to $i$, 
	$\late' \, \assign \, \late'[\setminus i]$
	\item else, go to 1
\end{inparaenum}
\item Add $((s_\B, \early, \late),\alpha,(s'_\B, \early', \late'))$ to $\Delta_{\autCB}$ and go to 1.
\end{compactenum}

\begin{example} 
In \figref{langincex}, we have an \nfa~$\B$ with $\lang \B =  \{\alpha\beta, \beta\}$, 
$I = \{(\alpha,\beta)\}$ and $k = 1$. The states of $\autCB$ are triples $(q, \early, \late)$, where 
$q \in Q_\B$ and $\early, \late \in \{\emptyset,\alpha,\beta\}$. 
We explain the derivation of a couple of transitions of $\autCB$. The transition shown in bold 
from $(q_0, \emptyset,\emptyset)$ on symbol $\beta$ is obtained by applying the following rules once: 
1. Pick $q_1$ since $(q_0, \alpha, q_1) \in \Delta_\B$. 2(a). $\late'\ \assign\ \beta$, $\early'\ \assign\ \emptyset$. 
3(a). $\early'\ \assign\ \alpha$. 4. Add $((q_0, \emptyset, \emptyset),\beta,(q_1, \alpha, \beta))$ to $\Delta_{\autCB}$. 
The transition shown in bold from $(q_1, \alpha,\beta)$ on symbol $\alpha$ 
is obtained as follows:
1. Pick $q_2$ since $(q_1, \beta, q_2) \in \Delta_\B$. 2(b). $\early'\ \assign\ \emptyset$, $\late'\ \assign\ \beta$.  
3(b). $\late'\ \assign\ \emptyset$. 4. Add $((q_1, \alpha, \beta),\beta,(q_2, \emptyset, \emptyset))$ to $\Delta_{\autCB}$.
It can be seen that $\autCB$ accepts the language $\{\alpha\beta,\beta\alpha,\beta\} = \CIk(\B)$.
\end{example}

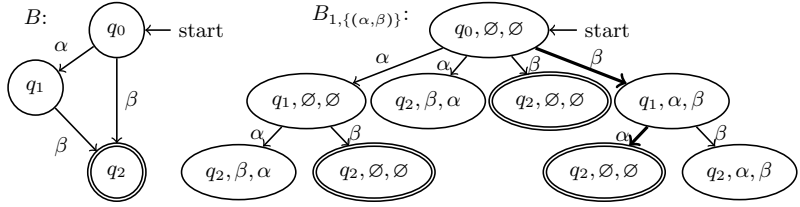
\begin{figure}
\begin{center}
\footnotesize{
\begin{tikzpicture}[->,node distance=1.25cm,semithick,scale=0.9, every node/.style={transform shape}]
\tikzstyle{every state}=[ellipse,text=black]

 \node[state] (q0)              {$q_0$};
 \node[right of=q0] (start) {start};
 \node[state,below of=q0,xshift=-1.2cm,yshift=4mm] (q1)   {$q_1$};
 \node[state,accepting,below of=q0, node distance=2.5cm,yshift=4mm]  (q2)   {$q_2$};
 
 \node[state,draw=none] (B) [left of=q0,node distance=1.2cm,yshift=0.2cm]  {\small{$\B$:}};

 \path (q0) edge [above left] node {$\alpha$} (q1)
 (start) edge (q0)
         (q0)   edge [right] node {$\beta$} (q2)
  (q1) edge node [below left] {$\beta$} (q2);

 \node[state,right of=q0,node distance=5.5cm] (r0)              {$q_0,\emptyset,\emptyset$};
 \node[right of=r0,xshift=0.5cm] (start2) {start};
 \node[state,below of=r0,xshift=-2.7cm,yshift=2mm] (r1)   {$q_1,\emptyset,\emptyset$};
 \node[state,below of=r0,xshift=-0.9cm,yshift=2mm] (r1a)     {$q_2,\beta,\alpha$};
 \node[state,accepting,below of=r0,xshift=0.9cm,yshift=2mm] (r1b)   {$q_2,\emptyset,\emptyset$};
 \node[state,below of=r0,xshift=2.7cm,yshift=2mm] (r1c)   {$q_1,\alpha,\beta$};

 \node[state,below of=r1,xshift=-1cm,yshift=2mm]  (r2)   {$q_2,\beta,\alpha$};
 \node[state,accepting,below of=r1,xshift=1cm,yshift=2mm]  (r2a)   {$q_2,\emptyset,\emptyset$};
 \node[state,accepting,below of=r1c,xshift=-1cm,yshift=2mm]  (r2b)   {$q_2,\emptyset,\emptyset$};
 \node[state,below of=r1c,xshift=1cm,yshift=2mm]  (r2c)   {$q_2,\alpha,\beta$};

 \node[state,draw=none] (autCB) [left of=r0,node distance=1.9cm,yshift=0.2cm]  {\small{$\B_{1,\{(\alpha,\beta)\}}$:}};
 
 \path (r0) edge [left] node[yshift=1mm] {$\alpha$} (r1)
 (start2) edge (r0)
           (r0) edge [left] node {$\alpha$} (r1a)
            edge [right] node {$\beta$} (r1b)
            edge [right,very thick] node[yshift=1mm] {$\beta$} (r1c)
       (r1) edge node [left,yshift=0.3mm] {$\alpha$} (r2)
            edge node [right,yshift=0.3mm] {$\beta$} (r2a)
       (r1c) edge [very thick,yshift=0.3mm] node [left] {$\alpha$} (r2b)
             edge node [right,yshift=0.3mm] {$\beta$} (r2c);

\end{tikzpicture}
}
\end{center}
\vspace{-5mm}
\caption{Example for illustrating construction of $\autCB$ for $k = 1$ and $I = \{(\alpha, \beta)\}$.}
\label{fig:langincex}
\end{figure}

\begin{proposition}
Given $k>0$, \nfa~$\autCB$ described above accepts $\CIk(\lang \B)$.
\end{proposition}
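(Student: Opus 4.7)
My plan is to prove both inclusions $\Lang(\autCB) \subseteq \CIk(\Lang \B)$ and $\CIk(\Lang \B) \subseteq \Lang(\autCB)$ by an induction on the length of the run, guided by a correspondence invariant between runs of $\autCB$ and runs of $\B$. Fix an accepting run of $\autCB$ on input $\sigma$, and let $\sigma'$ denote the word formed by concatenating the symbols $\beta$ guessed in step~1 of the construction at each transition. The key invariant I would maintain is: when $\autCB$ is in state $(s_\B, \early, \late)$ after having read a prefix of $\sigma$ of length $m$ and produced a prefix of $\sigma'$ of length $m$, then (i) $s_\B$ is reachable in $\B$ by reading the length-$m$ prefix of $\sigma'$, and (ii) the two partially-consumed words, once padded with the pending buffer contents, agree modulo at most $k$ forward passes of adjacent independent swaps. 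Rules 2(a)/(b) and 3(a)/(b) of the construction are designed precisely so that each transition preserves this invariant: an input symbol is either matched against a pending $\B$-symbol (after commuting past independent ones) or pushed into $\late$, and symmetrically for the $\B$-symbol $\beta$.

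For the forward direction, I would instantiate the invariant at an accepting configuration, where by definition both $\early$ and $\late$ are empty. This yields $(\sigma, \sigma') \in \asymp^k$ directly, and since the sequence of $\B$-transitions extracted from step~1 is by construction an accepting run of $\B$, we obtain $\sigma' \in \Lang \B$ and hence $\sigma \in \CIk(\Lang \B)$. For the reverse direction, given $\sigma' \in \Lang \B$ with an accepting $\B$-run, and a witness chain $\sigma' = \theta_0 \asymp \theta_1 \asymp \cdots \asymp \theta_k = \sigma$ realizing $(\sigma, \sigma') \in \asymp^k$, I would construct an accepting run of $\autCB$ on $\sigma$ step by step: the $\beta$ choice in step~1 is dictated by the $\B$-run on $\sigma'$, and the buffer updates in rules~2 and~3 are determined by the relative positions of matching symbols in $\sigma$ and $\sigma'$ induced by the chain. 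Termination at an accepting state with empty buffers follows because every symbol of $\sigma$ is eventually paired with its counterpart in $\sigma'$.

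The main obstacle will be pinning down the precise correspondence between the buffer-size bound implicit in the construction of $\autCB$ (the ``$k$-length'' restriction on $\early$ and $\late$) and the $k$-fold composition of the single-pass relation $\asymp$. One must show that allowing at most $k$ symbols to accumulate in the buffers is exactly what is needed to simulate $k$ forward passes of commutations, and conversely that no symbol needs to be displaced further than $k$ positions relative to its counterpart. This hinges on a structural lemma about $\asymp$: within a single forward pass, the relative order of any pair of symbols changes by at most one, so after $k$ passes the relative displacement is bounded by $k$. The soundness of rule~2(b)/3(b)'s ``independent of all symbols prior to $i$'' requirement corresponds to the fact that each commutation in a forward pass must swap genuinely adjacent independent pairs, and carefully threading this through the bookkeeping is where most of the technical work will lie.
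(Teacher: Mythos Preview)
The paper states this proposition without proof, so there is no argument in the paper to compare against; I can only assess your plan on its own merits. The overall architecture---prove both inclusions via an invariant linking a run of $\autCB$ on $\sigma$ to an accepting run of $\B$ on the guessed word $\sigma'$, with the buffers $\early,\late$ recording the symbols not yet matched---is sound, and you correctly isolate the crux as the correspondence between the buffer bound $k$ and the $k$-fold composition $\asymp^k$.

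Where your proposal goes wrong is in the structural lemma you intend to use for that correspondence. You write that ``within a single forward pass, the relative order of any pair of symbols changes by at most one, so after $k$ passes the relative displacement is bounded by~$k$.'' But per-symbol displacement is \emph{not} what the buffer bound controls, and it is not bounded by the number of passes: with a single pass (swaps at positions $1,2,\ldots,n$) one transforms $\beta\alpha_1\cdots\alpha_n$ into $\alpha_1\cdots\alpha_n\beta$, so $\beta$ moves $n$ positions, yet the automaton $B_{1,I}$ accepts this with buffers of size~$1$. Conversely, taking $\beta_1\beta_2\alpha_1\alpha_2$ to $\alpha_1\alpha_2\beta_1\beta_2$ requires two passes and forces both buffers to size~$2$, even though no symbol moves more than two positions. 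The quantity that actually matches the buffer bound is, at each prefix length~$t$, the number of symbols of $\sigma[1..t]$ not yet matched by symbols of $\sigma'[1..t]$ (and symmetrically)---a ``crossing number'' of the induced position-matching at the cut~$t$. That is the quantity the rules 2(a)/(b) and 3(a)/(b) increment and decrement, and that is what you must show is at most~$k$ precisely when $(\sigma,\sigma')\in\asymp^k$.

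A second, smaller issue: your stated invariant (``the two partially-consumed words, once padded with the pending buffer contents, agree modulo at most $k$ forward passes'') is essentially a restatement of the conclusion rather than something locally checkable from $(s_\B,\early,\late)$. For the induction to go through cleanly you will want an invariant phrased directly in terms of the buffer contents and the prefixes consumed so far---for instance, that $\early$ is exactly the subsequence of $\sigma'[1..t]$ not yet matched in $\sigma[1..t]$, with the independence side-conditions in 2(b)/3(b) guaranteeing the matching respects~$I$. Once you have that, the link to $\asymp^k$ becomes a separate, purely combinatorial lemma about words, decoupled from the automaton.
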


We develop a procedure to check 
language inclusion \upto~$I$ by iteratively increasing the bound $k$
\ifappendix
(see Appendix~\ref{app:algo}
\else
(see the full version~\cite{fullversion}
\fi
for the complete algorithm). 
The procedure is {\em incremental}: the check for $k+1$-bounded language
inclusion \upto~$I$ only explores paths along which the bound $k$
was exceeded in the previous iteration.

\section{Synchronization Synthesis}
\label{sec:algo}

We now present our iterative synchronization synthesis procedure,
which is based on the procedure in~\cite{POPL15}.  The reader is
referred to~\cite{POPL15} for further details. 
The synthesis procedure starts with the original program 
$\cProg$ and in each iteration generates a candidate synthesized 
program $\cProg'$. The candidate $\cProg'$ is
checked for preemption-safety w.r.t. $\cProg$  
under the abstract semantics,  
using our procedure for bounded language inclusion modulo $I$. 
If $\cProg'$ is found preemption-safe w.r.t. $\cProg$ under the abstract semantics, 
the synthesis procedure outputs $\cProg'$. 
Otherwise, an abstract counterexample $\cex$ is obtained. The counterexample is  
analyzed to infer additional synchronization to be added to $\cProg'$ 
for generating a new synthesized candidate.

\comment{
\begin{algorithm}
  \caption{The synthesis procedure\label{algo:main_algo}}
  \begin{algorithmic}[1]
    \Require Concurrent program $\cProg$
    \Ensure Concurrent program $\cProg'$ such that
    $\sem{\cProg'}^{P}_{abs} \equiv \sem{\cProg'}^{NP}_{abs}$
    \State $\cProg' \gets \cProg$
    \While {$\true$}
    \State $\Aut_{P} = \Aut(\sem{\cProg'}^P_{abs})$
    \State $\Aut_{NP} = \Aut(\sem{\cProg}^{NP}_{abs})$
    \If {$\mathit{LanguageInclusionUptoIndependence}(\Aut_{P}, \Aut_{NP})$} \Comment {\algoref{inclusionnew}}
    \State \Return $\cProg'$
    \Else
    \State $\cex \gets \mathit{CexToInclusion}(\Aut_{P}, \Aut_{NP})$
    \State $\Phi \gets \mathit{PreemptionFreeVariations}(\cex)$
    \State $\rho \gets \mathit{Generalize}(\Phi)$
    \State $\cProg' \gets \mathit{InferFixes}(\cProg', \rho)$
    \EndIf
    \EndWhile
  \end{algorithmic}
\end{algorithm}
}


The counterexample trace $\cex$ is a sequence of event identifiers: 
$\tid_0.\loc_0 ; \ldots ; \tid_n.\loc_n$, 
where each $\loc_i$ is a location identifier.
We first analyze the {\em neighborhood} of $\cex$, denoted 
$\nhood(\cex)$, consisting of traces that are permutations of the events in $\cex$. 
Note that each trace corresponds to an abstract observation sequence. 
Furthermore, note that preemption-safety requires the abstract
observation sequence of any trace in  
$\nhood(\cex)$ to be equivalent to that of some trace in $\nhood(\cex)$ feasible
under non-preemptive semantics. Let {\em bad traces} refer to the
traces in $\nhood(\cex)$  
that are feasible under preemptive semantics and do not meet the 
preemption-safety requirement. The goal of our counterexample analysis 
is to characterize all bad traces in $\nhood(\cex)$ in order to enable
inference of synchronization fixes.   

In order to succinctly represent subsets of $\nhood(\cex)$, we use {\em
ordering constraints}.
Intuitively, ordering constraints are of the following forms:
\begin{inparaenum}[(a)]
\item atomic constraints $\Phi = A < B$ where $A$ and $B$ are events
  from $\cex$. The constraint $A < B$ represents the set of traces in
  $\nhood(\cex)$ where event $A$ is scheduled before event $B$;
\item Boolean combinations of atomic constraints $\Phi_1 \wedge \Phi_2$,
  $\Phi_1 \vee \Phi_2$ and $\neg \Phi_1$. We have that $\Phi_1 \wedge
  \Phi_2$ and $\Phi_1 \vee \Phi_2$ respectively represent the
  intersection and union of the set of  traces represented by $\Phi_1$
  and $\Phi_2$, and that $\neg \Phi_1$ represents the complement
  (with respect to $\nhood(\cex)$) of the traces represented by
  $\Phi_1$.
\end{inparaenum}

\noindent{\bf Non-preemptive neighborhood}.
First, we generate all traces in $\nhood(\cex)$ that are feasible under
non-preemptive semantics.
We represent a single trace $\pi$ using an ordering constraint $\Phi_\pi$ that 
captures the ordering between non-independent accesses to
variables in $\pi$. 
We represent all traces in $\nhood(\cex)$ that are feasible under
non-preemptive semantics using the expression $\Phi = \bigvee_{\pi}
\Phi_\pi$. The expression $\Phi$ acts as the correctness specification
for traces in $\nhood(\cex)$.

\noindent {\em Example}. Recall the counterexample trace from the running example in \secref{runningexample}:  $\cex= \mathtt{T1.A; T2.A; T1.B; T1.C; T1.D; T2.B; T2.C; T2.D}$. 
There are two trace in $\nhood(\cex)$ that are feasible under non-preemptive semantics: $\pi_1=\mathtt{T1.A;T1.B;T1.C;T1.D;T2.A;T2.B;T2.C;T2.D}$
and $\pi_2=\mathtt{T2.A;T2.B;T2.C;T2.D;T1.A;T1.B;T1.C;T1.D}$. We represent $\pi_1$ 
as $\Phi(\pi_1)= \{\mathtt{T1.A,T1.C,T1.D}\} <\mathtt{T2.D} \; \wedge \; 
\mathtt{T1.D}<\{\mathtt{T2.A,T2.C,T2.D}\} \; \wedge \; \mathtt{T1.B}<\mathtt{T2.B}$
and $\pi_2$ as $\Phi(\pi_2) = \mathtt{T2.D} < \{\mathtt{T1.A,T1.C,T1.D}\} \; \wedge \; 
\{\mathtt{T2.A,T2.C,T2.D}\} < \mathtt{T1.D} \; \wedge \; \mathtt{T2.B}<\mathtt{T1.B}$.
The correctness specification is $\Phi = \Phi(\pi_1) \vee \Phi(\pi_2)$.

\noindent{\bf Counterexample generalization}.
We next build a quantifier-free first order formula $\Psi$ over the event identifiers in $\cex$ 
such that any model of $\Psi$ corresponds to a bad trace in $\nhood(\cex)$.
We iteratively enumerate models $\pi$ of $\Psi$, building 
a constraint $\rho= \Phi(\pi)$ for each model $\pi$, and 
generalizing each $\rho$ into $\rho_g$ to represent a larger set of
bad traces. 

\noindent {\em Example}. Our trace $cex$ from  \secref{runningexample} 
would be generalized to $\mathtt{T2.A}<\mathtt{T1.D} \wedge \mathtt{T1.D}<\mathtt{T2.D}$.
Any trace that fulfills this constraint is bad.

\noindent{\bf Inferring fixes}.
From each generalized formula $\rho_g$ described above, 
we infer possible synchronization fixes to eliminate all bad traces satisfying $\rho_g$. 
The key observation we exploit is that common concurrency bugs often show up in 
our formulas as simple patterns of ordering constraints between events. 
For example, the pattern $\tid_1.\loc_1 < \tid_2.\loc_2  \; \wedge \; 
\tid_2.\loc'_2 < \tid_1.\loc'_1$ indicates an atomicity violation and 
can be rewritten into $\mtt{lock}(\tid_1.[\loc_1:\loc'_1],\tid_2.[\loc_2:\loc'_2])$.
The complete list of such rewrite rules is 
\ifappendix
in Appendix~\ref{app:patterns}.
\else
in the full version~\cite{fullversion}.
\fi
This list includes  
inference of locks and reordering of notify statements.  
The set of patterns we use for synchronization inference are not
complete, i.e., there might be generalized formulae $\rho_g$ that are
not matched by any pattern.  In practice, we found our current set of
patterns to be adequate for most common concurrency bugs, including all
bugs from the benchmarks in this paper. Our technique and tool
can be easily extended with new patterns.

\noindent{\em Example}. The generalized constraint 
$\mathtt{T2.A < T1.D \; \wedge \; T1.D<T2.D}$ matches the lock rule and
yields $\mtt{lock(T2.[A:D],T1.[D:D])}$.
Since the lock involves events in the same function, the lock is merged into a 
single lock around instructions $\mtt{A}$ and $\mtt{D}$ in \texttt{open\_dev\_abs}.
This lock is not sufficient to make the program preemption-safe. 
Another iteration of the synthesis procedure generates another counterexample for analysis 
and synchronization inference.  

\comment{
\noindent{\bf Preserving traces}.
Our synthesis procedure will yield a program $\cProg'$ that is preemption-safe w.r.t.
$\cProg$. To ensure $\cProg'$ and $\cProg$ are also preemption-equivalent, 
we restrict the placement of synthesized locks and reordering of notify statements syntactically 
--- we do not permit the placement of locks  and reordering across preemption points ({\tt yield}, {\tt lock}, {\tt await}).
If such a lock is synthesized, then it is unlocked before the preemption point and locked again afterwards.
}
\begin{proposition}
 \label{prop:preemptequiv}
 If our synthesis procedure generates a program $\cProg'$, then
 $\cProg'$ is preemption-safe with respect to $\cProg$.
\end{proposition}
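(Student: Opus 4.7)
The plan is to chain together the earlier results. The procedure terminates and outputs a candidate $\cProg'$ only when the bounded language inclusion check succeeds, i.e., when $\Lang{\Aut(\sem{\cProg'}^{P}_{abs})} \subseteq \CIk(\Lang{\Aut(\sem{\cProg}^{NP}_{abs})})$ for the current bound $k$. So the task reduces to showing that this termination condition implies $\sem{\cProg'}^{P} \subseteq \sem{\cProg}^{NP}$.

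The first step is to observe that $\CIk(L) \subseteq \CI(L)$ for every language $L$. This follows directly from the definitions in Section 4: each $\asymp$-step is a finite sequence of adjacent independent-symbol swaps and is therefore contained in $\approx^t$, so $\asymp^k \subseteq \approx^t$ as well. Consequently, the termination condition upgrades to $\Lang{\Aut(\sem{\cProg'}^{P}_{abs})} \subseteq \CI(\Lang{\Aut(\sem{\cProg}^{NP}_{abs})})$.

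The second step is to apply Proposition~\ref{prop:inclusion}, which gives the equivalence between language inclusion modulo $I$ on the abstract automata and preemption-safety of the abstract semantics. This yields $\sem{\cProg'}^{P}_{abs} \subseteq \sem{\cProg}^{NP}_{abs}$. The final step is to apply Theorem~\ref{thm:correctness}, by which abstract preemption-safety implies concrete preemption-safety, i.e., $\sem{\cProg'}^{P} \subseteq \sem{\cProg}^{NP}$. This is exactly the definition of $\cProg'$ being preemption-safe w.r.t.\ $\cProg$.

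There is no real obstacle, since all the substantive work has been done in the earlier results; the only minor point to verify carefully is the inclusion $\CIk \subseteq \CI$, which is immediate from the definitions. Note that we do not need to argue anything about the intermediate fix-inference steps or the generalization of counterexamples: the claim concerns only the program finally output by the procedure, and that program passes the bounded language inclusion check by construction. In particular, soundness does not depend on the completeness of the set of rewrite patterns used in \textsc{InferFixes}; those patterns only affect whether and how quickly the loop terminates, not the correctness of its output.
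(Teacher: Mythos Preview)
Your argument is correct and is precisely the chain of implications the paper sets up: the synthesis loop outputs $\cProg'$ only when the bounded inclusion check succeeds, $\CIk \subseteq \CI$ follows immediately from $\asymp^k \subseteq {\approx^t}$, and then Proposition~\ref{prop:inclusion} followed by Theorem~\ref{thm:correctness} yields concrete preemption-safety. The paper itself does not spell out a proof of this proposition (it is stated as a consequence of the preceding development), so there is nothing further to compare; your write-up is the intended argument made explicit.
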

Note that our procedure does not guarantee that the synthesized program
$\cProg'$ is deadlock-free.
However, we avoid obvious deadlocks using heursitics such as merging
overlapping locks.
Further, our tool supports detection of any additional deadlocks
introduced by synthesis, but relies on the user to fix them.

\section{Implementation and Evaluation}
\label{sec:impl}

We implemented our synthesis procedure in \ourtool.
\ourtool\ is 
comprised of 5000 lines of C++ code and uses Clang/LLVM and Z3 as 
libraries. It is available as open-source software along with benchmarks at \url{https://github.com/thorstent/Liss}.
The language inclusion algorithm is available separately as a library called {\sc Limi} (\url{https://github.com/thorstent/Limi}).
\ourtool\ implements the synthesis method presented in this paper 
with several optimizations.  For example, we take advantage of the fact 
that language inclusion violations can often be detected by 
exploring only a small fraction of the input automata by 
constructing $\Aut(\sem{\cProg}^{NP}_{abs})$ and 
$\Aut(\sem{\cProg}^{P}_{abs})$ on the fly.

Our prototype implementation has several limitations.  First, 
\ourtool\ uses function inlining and therefore cannot handle 
recursive programs.  Second, we do not implement any form of alias 
analysis, which can lead to unsound abstractions.  For example, we 
abstract statements of the form ``\texttt{*x = 0}'' as writes to 
variable \texttt{x}, while in reality other variables can be 
affected due to pointer aliasing.  We sidestep this issue by 
manually massaging input programs to eliminate aliasing.  

Finally, \ourtool\ implements a simplistic lock insertion 
strategy.  Inference rules in Figure~\ref{fig:fix_patterns} 
produce locks expressed as sets of instructions that should be 
inside a lock.  Placing the actual lock and unlock instructions in 
the C code is challenging because the instructions in the trace 
may span several basic blocks or even functions. We follow a 
structural approach where we find the innermost common parent 
block for the first and last instructions of the lock and place 
the lock and unlock instruction there. This does not work if the 
code has gotos or returns that could cause control to jump over 
the unlock statement.  At the moment, we simply report such 
situations to the user.  

We evaluate our synthesis method against the following criteria:
\begin{inparaenum}[(1)]
    \item Effectiveness of synthesis from implicit specifications;
    \item Efficiency of the proposed synthesis procedure;
    \item Precision of the proposed coarse abstraction scheme on 
        real-world programs.
\end{inparaenum}

\noindent{\bf Implicit vs explicit synthesis~~}
In order to evaluate the effectiveness of synthesis from implicit 
specifications, we apply \ourtool\ to the set of benchmarks used in  
our previous {\sc ConRepair} tool for assertion-based synthesis~\cite{CAV14}.
In addition, we evaluate \ourtool\ and {\sc ConRepair} 
on several {\em new} assertion-based 
benchmarks (Table~\ref{table:experiments}).  
The set includes microbenchmarks modeling typical concurrency bug 
patterns in Linux drivers and the \texttt{usb-serial} 
macrobenchmark, which models a complete synchronization skeleton 
of the USB-to-serial adapter driver.  We preprocess these 
benchmarks by eliminating assertions used as explicit 
specifications for synthesis.  In addition, we replace statements 
of the form \texttt{assume(v)} with \texttt{await(v)}, redeclaring 
all variables \texttt{v} used in such statements as condition
variables.  This is necessary as our program syntax does
not include \texttt{assume} statements.

\begin{table}[t]
\footnotesize
\newcommand{\notime}{\textless 1s}
\renewcommand{\thefootnote}{{\it\alph{footnote}}}
\small
\begin{tabular}{l || >{\raggedright}p{0.07\textwidth} | p{0.07\textwidth} | p{0.07\textwidth} | p{0.07\textwidth} | p{0.09\textwidth} | p{0.09\textwidth} | p{0.09\textwidth} || p{0.09\textwidth}}
 Name & LOC & Th & It & MB  & BF(s) & Syn(s) & Ver(s) & CR(s) \\
 \hline
 \multicolumn{9}{c}{ConRepair benchmarks \cite{CAV14}} \\
 \hline
 ex1.c & 18 & 2 & 1 & 1\ & \notime & \notime & \notime & \notime\\
 ex2.c & 23 & 2 & 1 & 1\ & \notime & \notime & \notime & \notime \\
 ex3.c & 37 & 2 & 1 & 1\ & \notime & \notime & \notime & \notime \\
 ex5.c & 42 & 2 & 3 & 1\ & \notime & \notime & 2s & \notime \\
 lc-rc.c & 35 & 4 & 0 & 1\ & - & - & \notime & 9s \\
 dv1394.c & 37 & 2 & 1 & 1\ & \notime & \notime & \notime & 17s \\
 em28xx.c & 20 & 2 & 1 & 1\ & \notime & \notime & \notime & \notime \\
 f\_acm.c & 80 & 3 & 1 & 1\ & \notime & \notime & \notime & 1871.99s \\
 i915\_irq.c & 17 & 2 & 1 & 1\ & \notime & \notime & \notime & 2.6s \\
 ipath.c & 23 & 2 & 1 & 1\ & \notime & \notime & \notime & 12s \\
 iwl3945.c & 26 & 3 & 1 & 1\ & \notime & \notime & \notime & 5s \\
 md.c & 35 & 2 & 1 & 1\ & \notime & \notime & \notime & 1.5s \\
 myri10ge.c & 60 & 4 & 2 & 1\ & - & - & \notime & 1.5s \\
 usb-serial.bug1.c & 357 & 7 & 2 & 1\ & 0.4s & 3.1s & 3.4s & \TO\footnotemark[2] \\
 usb-serial.bug2.c & 355 & 7 & 1 & 3 & 0.7s & 2.1s & 12.9s & 3563s \\
 usb-serial.bug3.c & 352 & 7 & 1 & 4 & 3.8s & 1.3s & 111.1s & \TO\footnotemark[2] \\
 usb-serial.bug4.c & 351 & 7 & 1 & 4 & 93.9s & 2.4s & 123.1s & \TO\footnotemark[2] \\
 usb-serial.c\footnotemark[1]  & 357 & 7 & 1 & 4 & - & - & 103.2s & 1200s \\
 \hline
 \multicolumn{9}{c}{CPMAC driver benchmark} \\
 \hline
 cpmac.bug1.c & 1275 & 5 & 1 & 1\ & 1.3s & 113.4s & 21.9s & - \\
 cpmac.bug2.c & 1275 & 5 & 1 & 1\ & 3.3s & 68.4s & 27.8s & - \\
 cpmac.bug3.c & 1270 & 5 & 1 & 1\ & 5.4s & 111.3s & 8.7s & - \\
 cpmac.bug4.c & 1276 & 5 & 2 & 1\ & 2.4s & 124.8s & 31.5s & -\\
 cpmac.bug5.c & 1275 & 5 & 1 & 1\ & 2.8s & 112.0s & 58.0s & -\\
 cpmac.c\footnotemark[1] & 545\footnotemark[3] & 5 & 1 & 1\ & - & - & 17.4s & -\\
\end{tabular}
\vspace{-2mm}
\begin{flushleft}
Th=Threads, It=Iterations, MB=Max bound, BF=Bug finding, Syn=Synthesis, Ver=Verification, Cr={\sc ConRepair}  \hspace{5mm}
\footnotemark[1] bug-free example \hspace{5mm} \footnotemark[2] timeout after 3 hours \\
\end{flushleft}
\vspace{-5mm}
\caption{Experiments}
\label{table:experiments}
\end{table}

We use \ourtool\ to synthesize a preemption-safe version of
each benchmark.  This method is based on the assumption that the 
benchmark is correct under non-preemptive scheduling and bugs can 
only arise due to preemptive scheduling.  We discovered two 
benchmarks (\texttt{lc-rc.c} and \texttt{myri10ge.c}) that violated this assumption, i.e., 
they contained race conditions
that manifested under non-preemptive scheduling; \ourtool\ 
did not detect these race conditions.  \ourtool\ was able to detect and fix 
all other known races without relying on assertions.  Furthermore, 
\ourtool\ detected a new race in the \texttt{usb-serial} family of 
benchmarks, which was not detected by {\sc ConRepair} due to a missing 
assertion.  We compared the output of \ourtool\ with manually placed 
synchronization (taken from real bug fixes) and found that the two 
versions were similar in most of our examples.  

\noindent{\bf Performance and precision.}
{\sc ConRepair} uses CBMC for verification and counterexample 
generation. Due to the coarse abstraction we use, both steps are 
much cheaper with \ourtool.  For example, verification of {\tt 
usb-serial.c}, which was the most complex in our set of 
benchmarks, took \ourtool\ 82 seconds, whereas it took 
{\sc ConRepair} 20 minutes~\cite{CAV14}.  

The loss of precision due to abstraction may cause the inclusion 
check to return a counterexample that is spurious in the concrete
program, leading to unnecessary 
synchronization being synthesized.  On our existing benchmarks,
this only occurred once in the \texttt{usb-serial} driver, where 
abstracting away the return value of a function led to an 
infeasible trace.  We refined the abstraction manually by 
introducing a condition variable to model the return value.  

While this result is encouraging, synthetic benchmarks are not 
necessarily representative of real-world performance.  We 
therefore implemented another set of benchmarks based on a 
complete Linux driver for the TI AR7 CPMAC Ethernet controller.  
The benchmark was constructed as follows.  We manually 
preprocessed driver source code to eliminate pointer aliasing.  We 
combined the driver with a model of the OS API and the software 
interface of the device written in C.  We modeled most OS API 
functions as writes to a special memory location.
Groups of unrelated functions were modeled using 
separate locations.  Slightly more complex models were required 
for API functions that affect thread synchronization.  For 
example, the \texttt{free\_irq} function, which disables the 
driver's interrupt handler, blocks waiting for any outstanding 
interrupts to finish.  Drivers can rely on this behavior to avoid 
races.  We introduced a condition variable to model this 
synchronization.
Similarly, most device accesses were modeled as writes to a 
special \texttt{ioval} variable.  Thus, the only part of the device that 
required a more accurate model was its interrupt enabling logic, 
which affects the behavior of the driver's interrupt handler 
thread.

Our original model consisted of eight threads.   \ourtool\ ran out
of memory on this model, so we simplified it to five threads by 
eliminating parts of driver functionality.  Nevertheless, we 
believe that the resulting model represents the most complex 
synchronization synthesis case study, based on real-world code, 
reported in the literature.  

The CPMAC driver used in this case study did not contain any known 
concurrency bugs, so we artificially simulated five typical race 
conditions that commonly occur in drivers of this 
type~\cite{CAV13}.  \ourtool\ was able to detect and automatically 
fix each of these defects (bottom part of 
Table~\ref{table:experiments}).  We only encountered two program 
locations where manual abstraction refinement was necessary.

We conclude that (1) our coarse abstraction is highly precise in 
practice; (2) manual effort involved in synchronization synthesis 
can be further reduced via automatic abstraction refinement; (3) 
additional work is required to improve the performance of our 
method to be able to handle real-world systems without 
simplification.  In particular, our analysis indicates that 
significant speed-up can be obtained by incorporating a partial 
order reduction scheme into the language inclusion algorithm.

\section{Conclusion}
We believe our approach and the encouraging experimental results open
several directions for future research. Combining
the abstraction refinement, verification (checking language
inclusion modulo an independence relation), and synthesis (inserting
synchronization) more tightly could bring improvements in efficiency.  
An additional direction we plan on exploring is automated handling of
deadlocks, i.e., extending our technique to automatically synthesize
deadlock-free programs.
Finally, we plan to further develop our prototype tool and apply it to
other domains of concurrent systems code. 

\clearpage

\bibliographystyle{splncs03}
\bibliography{refs}

\clearpage

\begin{appendix}

\section{Semantics of preemptive and non-preemptive execution}
\label{app:semantics}

In \figref{nonpreemptive_semantics} we present the non-preemptive semantics.
The preemptive semantics consist of the rules of the non-preemptive semantics
and the single rule in \figref{preemptive_semantics}.

We denote the state of a program as $\langle \VariableValuation, 
ctid, (\Prog_1, \ldots, \Prog_n) \rangle$ where
\begin{inparaenum}[(a)]
\item Valuation $\VariableValuation$ is a valuation of all program
  variables.
  Further, for each lock $l$, we have that $\VariableValuation[l]$ holds
  the identifier of the thread that currently holds the lock, or $0$ if
  no thread holds the lock.
  Similarly, for a condition variable $c$, we have that
  $\VariableValuation[c] = 0$ if the variable is reset and
  $\VariableValuation[c] = 1$ otherwise.
\item The value $ctid$ is the thread identifier of the current
  executing thread or $0$ in the initial state, and
\item Program fragments $\Prog_1$ to $\Prog_n$ are the parts of the
  program to be executed by $\thread_1$ to $\thread_n$, respectively.
\end{inparaenum}

The premise in rule {\sc Sequential} refers to the single-threaded semantics in \figref{single_thread_semantics} or the abstract
single-threaded semantics in \figref{abstract_semantics}.
Rules {\sc LockYield} and {\sc AwaitYield} force a context-switch iff the lock is not available or the condition variable is not set.

\begin{figure}
  \caption{Operational non-preemptive semantics\label{fig:nonpreemptive_semantics}}
  \begin{display}{}
   \staterule{ScheduleStart}
    { ctid = 0 \quad 1 \leq ctid' \leq n }
    {\smallstepx{\VariableValuation,   ctid, (
    \Prog_1, \ldots, \Prog_n)}{\VariableValuation,  
      ctid', (\Prog_1 ,\ldots, \Prog_n)}{\epsilon}}
    \\[\GAP]
    \staterule{Sequential}
    { ctid = i \qquad \smallstepx{\VariableValuation , \Prog_i}{\VariableValuation',
    \Prog_i'}{\alpha} }
    {\smallstepx{\VariableValuation,   ctid, (
    \Prog_1, \ldots, \Prog_i, \ldots, \Prog_n)}{\VariableValuation,  
      ctid, (\Prog_1 ,\ldots, \Prog_i', \ldots, \Prog_n)}{\alpha}}
    \\[\GAP]
    \staterule{LockYield}
    {ctid=i 
     \qquad \VariableValuation(l)\notin \{0,i\} \qquad 1 \leq ctid' \leq n}
    {\smallstep{\VariableValuation,   ctid, (
      \Prog_1,\ldots, \Lock{l}, \ldots, \Prog_n)}{\VariableValuation,  
        ctid', (\Prog_1, \ldots, \Lock{l}, \ldots, \Prog_n)}}
    \\[\GAP]
    \staterule{Lock}
    { ctid=i \qquad \VariableValuation(l)\in \{0,i\}}
    {\smallstep{\VariableValuation, ctid, (\Prog_1,
      \ldots, \Lock{l}, \ldots, \Prog_n)}{\VariableValuation[l := i],  
        ctid, (\Prog_1, \ldots, \Skip, \ldots, \Prog_n)}}
    \\[\GAP]
    \staterule{Unlock}
    { ctid = i \qquad \VariableValuation(l) = ctid}
    {\smallstep{\VariableValuation,  ctid, (\Prog_1,
      \ldots, \Unlock{l}, \ldots, \Prog_n)}{\VariableValuation[l := 0], 
        ctid, (\Prog_1, \ldots, \Skip, \ldots, \Prog_n)}}
    \\[\GAP]
    \staterule{AwaitYield}
    { ctid=i \qquad \VariableValuation(c)= \falses \qquad 1 \leq ctid' \leq n}
    {\smallstep{\VariableValuation, ctid, (\Prog_1,
      \ldots, \Await{c}, \ldots, \Prog_n)}{\VariableValuation,
    ctid, (\Prog_1, \ldots, \Await{c}, \ldots, \Prog_n)}}
    \\[\GAP]
    \staterule{Await}
    { ctid=i \qquad \VariableValuation(c)= \trues}
    {\smallstep{\VariableValuation, ctid, (\Prog_1,
      \ldots, \Await{c}, \ldots, \Prog_n)}{\VariableValuation,
    ctid, (\Prog_1, \ldots, \Skip, \ldots, \Prog_n)}}
    \\[\GAP]
    \staterule{Signal}
    {ctid=i}
    {\smallstep{\VariableValuation,  ctid, (\Prog_1,
      \ldots, \Signal{c}, \ldots, \Prog_n)}{\VariableValuation[c := \trues],
    ctid, (\Prog_1, \ldots, \Skip, \ldots, \Prog_n)}}
    \\[\GAP]
    \staterule{Reset}
    {ctid=i}
    {\smallstep{\VariableValuation,  ctid, (\Prog_1,
      \ldots, \Reseta{c}, \ldots, \Prog_n)}{\VariableValuation[c := \falses],
    ctid, (\Prog_1, \ldots, \Skip, \ldots, \Prog_n)}}
    \\[\GAP]
    \staterule{Sequence}
    { s_1\neq\Skip \qquad
    \smallstepx{\VariableValuation,   ctid, (
    \Prog_1, \ldots, s_1, \ldots, \Prog_n)}{\VariableValuation,  
      ctid, (\Prog_1 ,\ldots,s_1', \ldots, \Prog_n)}{\alpha}}
    {\smallstepx{\VariableValuation,   ctid, (
    \Prog_1, \ldots, \Seq{s_1}{s_2}, \ldots, \Prog_n)}{\VariableValuation,  
      ctid, (\Prog_1 ,\ldots, \Seq{s_1'}{s_2}, \ldots, \Prog_n)}{\alpha}}
    \\[\GAP]
    \staterule{DescheduleSkip}
    {ctid = i \qquad 1 \leq ctid' \leq n \qquad \Prog_i = \Skip}
    {\smallstep{\VariableValuation,   ctid, (\Prog_1,
      \ldots, \Prog_i, \ldots, \Prog_n)}{\VariableValuation,  
    ctid', (\Prog_1, \ldots, \Prog_i, \ldots, \Prog_n)}}
    \\[\GAP]
    \staterule{Yield}
    { ctid = i \qquad 1 \leq ctid' \leq n}
    {\smallstep{\VariableValuation,   ctid, (\Prog_1,
      \ldots, \Yield, \ldots, \Prog_n)}{\VariableValuation,  
    ctid', (\Prog_1, \ldots, \Skip, \ldots, \Prog_n)}}
  \end{display}
\end{figure}

\clearpage

\begin{figure}
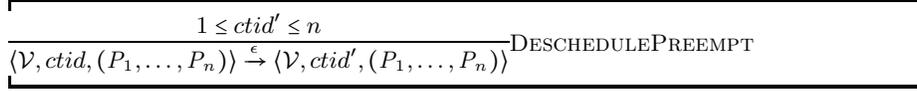

  \caption{From non-preemptive semantics to preemptive semantics\label{fig:preemptive_semantics}}
  \begin{display}{}
    \staterule{DeschedulePreempt}
    {1 \leq ctid' \leq n}
    {\smallstep{\VariableValuation,  
        ctid, (\Prog_1, \ldots, \Prog_n)}{\VariableValuation,  
        ctid', (\Prog_1, \ldots, \Prog_n)}}
  \end{display}
\end{figure}

\section{Proof of \thmref{correctness}}
\label{app:proof}

\correctness*

\begin{proof}
Let us assume $\sem{\cProg'}^{P}_{abs} \subseteq \sem{\cProg}^{NP}_{abs}$. 

Let $\sigma'$ be a concrete observation sequence in $\sem{\cProg'}^{P}$. 
Let $\sigma'_{abs}$ be the abstract observation sequence in $\sem{\cProg'}^{P}_{abs}$
corresponding to $\sigma'$. 
Then, there exists $\sigma_{abs} \in \sem{\cProg}^{NP}_{abs}$ 
such that $\sigma_{abs}$ is equivalent to $\sigma'_{abs}$.

Observe that if two abstract observation sequences --- 
$\sigma'_{abs}$ from $\sem{\cProg'}^{P}_{abs}$ and 
$\sigma_{abs}$ from $\sem{\cProg}^{NP}_{abs}$ --- are equivalent, 
then they correspond to executions over the {\em same} observable control-flow paths 
with the same data-flow into havoc and input/output statements.
Hence, $\sigma'_{abs}$ and $\sigma_{abs}$ either both map back to 
infeasible concrete observation sequences, or both map back to 
feasible, {\em equivalent} concrete observation sequences.

Since $\sigma'_{abs}$ maps back to a feasible concrete observation sequence $\sigma'$ 
by definition, $\sigma_{abs}$ also maps back to a feasible concrete observation sequence, 
say  $\sigma$, such that $\sigma$ is equivalent to $\sigma'$.
Hence, we have $\sem{\cProg'}^{P} \subseteq \sem{\cProg}^{NP}$.
\qed
\end{proof}

\clearpage

\section{Language Inclusion Procedure}
\label{app:algo}

The algorithm for $k$-bounded language inclusion \upto~$I$ is presented as
function \textsc{Inclusion} in \algoref{inclusionnew} (ignore Lines
\ref{line:overflowcheck}-\ref{line:markdirty} for now) .  
The function proceeds exactly
as the standard antichain algorithm outlined earlier.  
It explores $\A$ nondeterministically as before, and $\autCB$ is determinized on the fly for
exploration.  
The antichain and frontier sets consist of tuples of the form
$(s_\A, \autC)$, where $s_\A \in Q_\A$ and $\autC \subseteq Q_\B \times
\Sigma^k \times \Sigma^k$.  
Each tuple in the frontier set is first checked for
equivalence w.r.t. acceptance (\lineref{acceptcheck}).  
If this check fails, the function reports language inclusion failure (\lineref{fail}).  
If this check succeeds, the  successors are computed (\lineref{succ}). If a successor satisfies Rule 1,
it is ignored (\lineref{antichain1}), otherwise it is added to the frontier (\lineref{updfront})
and the antichain (\lineref{updantichain}). During the update of the antichain the algorithm ensures
that its invariant is preserved according to rule 2. The frontier also stores a sequence of symbols
that lead to a particular tuple of states in order to return a counterexample trace if
language inclusion fails.

We develop a procedure to check 
language inclusion \upto~$I$ by iteratively increasing the bound $k$ (see \algoref{inclusionnew} in the appendix). The procedure is {\em incremental}: the check for $k+1$-bounded language inclusion \upto~$I$ only explores paths along which the bound $k$ was exceeded in the previous iteration. 
Given a newly computed successor $(s_\A', \autC')$ for an iteration with bound $k$, 
if there exists some $(s_\B,\eta_1,\eta_2)$ in $\autC'$ such that the length of 
$\eta_1$ or $\eta_2$ exceeds $k$ (\lineref{overflowcheck}), 
we remember the tuple $(s_\A', \autC')$ in the set $\overflow$ (\lineref{updoverflow}).
We continue exploration of $\autCB$ from all states $(s_\B,\eta_1,\eta_2)$ 
with $|\eta_1| \leq k \wedge |\eta_2| \leq k$, but mark them \dirty. 
If we find a counter-example to language inclusion we return it and test
if it is spurious (\lineref{testsp}).
It may be a spurious counter-example caused because we removed states exceeding $k$.
In that case we increase the bound to $k+1$, remove all dirty items from the antichain
and frontier (lines \ref{line:cleanfrontier}-\ref{line:cleanantichain}), and
add the items from the overflow (\lineref{overflow}).
Intuitively this will undo all exploration from the point(s) the bound was 
exceeded and restarts from that/those point(s).

To test if a particular counterexample is spurious,
we invoke the language inclusion procedure,
replacing the preemptive automaton with the exact 
trace (trace automaton) and allowing an infinite bound.  This is 
fast and guaranteed to terminate as the trace automaton does 
not have loops.  We found that this optimization helps find a 
valid counterexample faster.

\clearpage

\begin{algorithm}[H]
\begin{algorithmic}[1]
 \Require Automata $A=(Q_A,\Sigma_A,\Delta_A,I_A,F_A)$ and $B=(Q_B,\Sigma_B,\Delta_B,I_B,F_B)$
 \Ensure $\true$ only if $\Lang{A} \subseteq \CI(\Lang B)$, $\false$ only if $\Lang{A} \not\subseteq \CI(\Lang B)$
  \State $\frontier\gets\{(s_\A,\{(I_\B,\emptyset,\emptyset)\},\emptyset): s_\A \in I_\A\}$   
  \State All tuples in $\frontier$ are not \dirty
  \State $\antichain\gets\frontier$
  \State $\overflow\gets\emptyset$
  \State $k\gets 2$
  \While{$\true$}
    \State $cex\gets\Call{inclusion}{k}$
    \If{$cex\neq\true \wedge cex$ is spurious} \label{line:testsp}
      \State $k\gets k+1$
      \State $\frontier\gets\{(s_\A,\autC)\in\frontier: \autC\mbox{ not \dirty}\}\cup\overflow$     \label{line:cleanfrontier}
      \State $\antichain\gets\{(s_\A,\autC)\in\antichain: \autC\mbox{ not \dirty}\}\cup\overflow$   \label{line:cleanantichain}
      \State $\overflow\gets\emptyset$                                                          \label{line:overflow}
    \Else
      \State\Return $cex$
    \EndIf
  \EndWhile
  \Statex

 \Function{inclusion}{$k$}
  \While {$\frontier\neq\emptyset$}
    \State remove a tuple $(s_\A,\autC,cex)$ from $\frontier$
    \If{$s_\A \in F_\A$ $\wedge$ $(\autC \cap F_\B) = \emptyset$} \label{line:acceptcheck}
      \Return $cex$                                     \label{line:fail}
    \EndIf
    \ForAll{$\alpha\in\Sigma$}
      \State $(s'_\A,\autC') \gets \suc_\alpha(s_\A,\autC)$ \label{line:succ}
       \If{$\nexists p \in\antichain: p \sqsubseteq (s'_\A, \autC')$} \label{line:antichain1} \Comment{\text{Rule} 1} 
          \If{$\exists(s_\B,\eta_1,\eta_2)\in \autC':\ |\eta_1|>k \, \vee \, |\eta_2| > k$} \label{line:overflowcheck}
            \If{$\autC'$ not \dirty} $\overflow\gets\overflow\cup\{(s_\A',\autC')\}$ \EndIf \label{line:updoverflow}
            \State $\autC' \gets \{(s_\B,\eta_1,\eta_2) \in \autC': \ |\eta_1| \leq k \, \wedge \, |\eta_2| \leq k\}$  \label{line:removerflow}
            \State Mark $\autC'$ \dirty           \label{line:markdirty}
          \EndIf
          \State $\frontier\gets\frontier \cup \{(s_\A',\autC',cex\cdot\alpha)\}$ \label{line:updfront}
          \State $\antichain\gets\antichain \backslash \{p: \autC' \sqsubseteq p\} \cup \{(s_\A',\autC')\}$ \Comment{\text{Rule} 2} \label{line:updantichain}
        \EndIf
    \EndFor
  \EndWhile
  \State\Return $\true$
 \EndFunction
\end{algorithmic}
\caption{Checking language inclusion \upto~$I$}
\label{algo:inclusionnew}
\end{algorithm}

\clearpage

\section{Synchronization inference rules}
\label{app:patterns}

The inference rules are applied as rewrite rules to the formula 
$\rho_g$ obtained in \secref{algo}. Each rule requires a 
certain subexpression in $\rho_g$ and rewrites it to a synchronization primitive.
That means that a single $\rho_g$ could possibly be solved by one of several synchronization
primitives.

The two lock rules fix atomicity violations and the reorder rule
fixes ordering violations.
The {\sc Add.Lock} rule captures a set of threads where thread 1 is descheduled at or after location $\loc_1$
and thread 2 is scheduled at or before $\loc_2$. Another context switch deschedules thread 2 at or after $\loc'_2$
and schedules again thread 1 at or before $\loc'_1$. As this pattern is present in the generalized $\rho_g$
this context switch is necessary to make the trace bad. We can avoid this context switch by adding the lock
from the conclusion. The {\sc Add.Lock2} rules captures the more general case where both, thread 2 interrupting
thread 1 and thread 1 interrupting thread 2, are bad traces.

The {\sc Add.Reorder} rule captures an ordering violation that can be fixed by 
moving a \Signal{} statement. Intuitively the \Await{} statement is
signaled too early and thread 1 can start running in the preemptive semantics.
In the non-preemptive semantics thread 2 keeps running after a \Signal{} statement until 
a preemption point is reached.

\begin{figure}[H]
 \vspace{10pt}
 \begin{minipage}{0.99\linewidth}
 \footnotesize{
  \centering
  \begin{tabular}[t]{@{}c@{\;\;}c@{\;\;}c}
    \infer[\textsc{Add.Lock}]{\mtt{lock}(\tid_1.[\loc_1:\loc'_1],\tid_2.[\loc_2:\loc'_2]) \vee \mixcons}
    {\rho_g = \tid_1.\loc_1 < \tid_2.\loc'_2  \; \wedge \;
\tid_2.\loc_2 < \tid_1.\loc'_1
       \;\wedge\; \mixcons 
    }\\[\GAP]
    \infer[\textsc{Add.Lock2}]{\mtt{lock}(\tid_1.[\loc_1:\loc'_1],\tid_2.[\loc_2:\loc'_2]) \vee \mixcons}
    {\rho_g = \tid_1.\loc_1 < \tid_2.\loc_2  \; \wedge \;
\tid_2.\loc'_2 < \tid_1.\loc'_1
       \;\wedge\; \mixcons 
    }\\[\GAP]
    \infer[\textsc{Add.Reorder}]{\mtt{reorder}(\tid_2.\loc_2,\tid_2.\loc'_2) \vee \mixcons}
    {\rho_g = \tid_1.\loc'_1 < \tid_2.\loc'_2
       \; \wedge \; \mixcons 
      & \exists \tid_1.\loc_1,\tid_2.\loc_2:\; 
      & \tid_1.\loc_1 < \tid_1.\loc'_1 \\
      \tid_2.\loc_2 < \tid_2.\loc'_2
      & \tid_1.\loc_1 = \Await{c}
      & \tid_2.\loc_2 = \Signal{c}
    }
  \end{tabular}
  }
 \end{minipage}
 \caption{Synchronization inference rules}
 \label{fig:fix_patterns}
\end{figure}
\clearpage

\end{appendix}

\end{document}